\pgfplotsset{width=8cm,compat=newest}
\def\colorful{1}
\newcommand{\subalign}[1]{%
  \vcenter{%
    \Let@ \restore@math@cr \default@tag
    \baselineskip\fontdimen10 \scriptfont\tw@
    \advance\baselineskip\fontdimen12 \scriptfont\tw@
    \lineskip\thr@@\fontdimen8 \scriptfont\thr@@
    \lineskiplimit\lineskip
    \ialign{\hfil$\m@th\scriptstyle##$&$\m@th\scriptstyle{}##$\hfil\crcr
      #1\crcr
    }%
  }%
}
\newcommand{\Path}{\mathrm{path}}
\newcommand{\Unif}{\mathrm{Unif}}
\newlist{enumprop}{enumerate}{1} 
\setlist[enumprop]{label=\arabic*.,ref=\theproposition.\arabic*}
\newcommand{\hw}{\mathrm{hw}}
\newcommand{\pparagraph}[1]{\bigskip \noindent {\bf {#1}}}
\begin{document}

\title{
The power of quantum circuits in sampling

\vspace{10pt}

}

\newcommand{\authcell}[2]{%
  \begin{tabular}[t]{@{}c@{}}#1\\[6pt]{\slshape #2}\end{tabular}%
}

\author{%
\hspace{2pt}
\begin{tabular*}{\textwidth}{@{\extracolsep{\fill}}ccccc}
  \authcell{Guy Blanc}{Stanford} &
  \authcell{Caleb Koch}{Stanford} &
  \authcell{Jane Lange}{MIT} &
 \hspace{-10pt} \authcell{Carmen Strassle}{Stanford} &
\hspace{-15pt}
  \authcell{Li-Yang Tan}{Stanford}
\end{tabular*}
\vspace{15pt}
}

\date{\small\today}


 \date{\small{\today}}

\maketitle

\begin{abstract}
We give new evidence that quantum circuits are substantially more powerful than classical circuits. We show, relative to a random oracle, that polynomial-size quantum circuits can sample distributions that subexponential-size classical circuits cannot approximate even to TV distance $1-o(1)$. Prior work of Aaronson and Arkhipov (2011) showed such a separation for the case of exact sampling (i.e.~TV distance $0$), but separations for approximate sampling were only known for uniform algorithms.

A key ingredient in our proof is a new hardness amplification lemma for the classical query complexity of the Yamakawa--Zhandry (2022) search problem. We show that the probability that any family of query algorithms collectively finds $k$ distinct solutions decays exponentially in $k$. 
\end{abstract}

  \thispagestyle{empty}
  \newpage 


  \setcounter{page}{1}

\section{Introduction}

A central goal of quantum complexity theory is to understand the relative power of quantum and classical circuits. For circuits computing functions $f : \zo^n\to\zo$ this is the question of whether $\mathsf{BQP/poly} = \mathsf{P/poly}$. In this paper we are interested in the complexity of {\sl distributions}, rather than functions, over $\zo^n$. A standard way of measuring the complexity of a distribution is by the complexity of sampling from it: the size of the smallest circuit that generates samples from it. The most basic question in this setting, as in others, is whether quantumness buys us any appreciable power at all: 

\begin{question}[Does every quantum sampler have a classical counterpart?]
\label{q:main}
     Does every polynomial-size quantum circuit $C$ have a corresponding polynomial-size classical circuit whose output distribution well-approximates that of $C$’s? 
\end{question}

As our main result, we answer~\Cref{q:main} relative to a random oracle: \medskip 

 \begin{tcolorbox}[colback = white,arc=1mm, boxrule=0.25mm]
\begin{theorem} 
\label{thm:main intro}
The following holds relative to a random oracle. For every sufficiently large $n$, there is an explicit distribution $\mathcal{D}^{(n)}$ over $\{0,1\}^n$ that can be sampled exactly by a polynomial-size quantum circuit but no subexponential-size classical circuit can approximate $\mathcal{D}^{(n)}$ even to TV distance $1-2^{-\tilde{\Omega}(\sqrt{n})}$.
\end{theorem}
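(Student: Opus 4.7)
The plan is to build the distribution $\mathcal{D}^{(n)}$ from the Yamakawa--Zhandry (YZ) search problem: set $\mathcal{D}^{(n)}$ to be the uniform distribution over the (random) solution set $S = S(\mathcal{O})$ of the YZ problem relative to the oracle $\mathcal{O}$. The quantum upper bound is then essentially immediate from YZ's construction, which prepares a uniform superposition over $S$ using polynomially many queries; measuring yields an exact sample from $\mathcal{D}^{(n)}$.

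For the classical lower bound, the strategy is a reduction: I would show that any classical subexponential-size circuit $C$ which TV-approximates $\mathcal{D}^{(n)}$ can be converted into a classical query algorithm that finds many distinct YZ solutions, contradicting the new hardness amplification lemma (HAL) advertised in the abstract. Suppose the distribution $\mathcal{D}'$ sampled by $C$ satisfies $d_{\mathrm{TV}}(\mathcal{D}', \mathcal{D}^{(n)}) \leq 1 - \delta$ with $\delta = 2^{-\tilde{\Omega}(\sqrt{n})}$. By the standard coupling characterization of TV distance, each sample from $\mathcal{D}'$ can be coupled to a fresh uniform sample from $S$ so that they coincide with probability at least $\delta$. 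Running $C$ independently $T$ times thus yields, in expectation, $T\delta$ ``effectively uniform'' samples from $S$, and a standard calculation shows these contain $\Omega(T\delta)$ distinct elements of $S$ with constant probability (provided $T\delta \ll |S|$).

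The next step is to choose $T$ subexponential but large enough that $k := T\delta$ is super-polynomial. The combined procedure forms a family of classical query algorithms---each run of $C$ being one member of the family---making a total of at most $2^{n^{o(1)}}$ oracle queries and finding $k$ distinct solutions with constant probability. The HAL forces this probability to be at most $2^{-\Omega(k)}$, a contradiction. To translate this query-complexity statement to a circuit lower bound against a truly random oracle, I would combine a union bound over all classical circuits of bounded size with a Borel--Cantelli argument over $n$, deducing that the desired separation holds with probability $1$ over the draw of $\mathcal{O}$.

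The main obstacle I anticipate is establishing the HAL itself: extending the single-solution query hardness of YZ to a tight multi-solution bound requires controlling correlations between solutions found across adaptive, potentially adversarially coordinated attempts, and the exponential-in-$k$ decay rate is exactly what the TV-distance reduction needs in order to beat the tiny overlap parameter $\delta = 2^{-\tilde{\Omega}(\sqrt{n})}$. A secondary technical point is ensuring that the coupling-based reduction yields $\Omega(T\delta)$ \emph{distinct} solutions with high probability, rather than many repeats of a small set of common outputs; this hinges on $|S|$ being large enough that $T\delta$ uniform samples from $S$ rarely collide, which should follow from the parameters of the YZ construction.
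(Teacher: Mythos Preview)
Your plan is essentially the paper's: the hard distribution is $\Unif(\mcC_{\bmcO})$ for the YZ code, quantum easiness comes straight from YZ, classical hardness is proved by turning a good sampler into a family that finds many distinct solutions and invoking the hardness amplification lemma, then union-bounding over all size-$s$ circuits (and Borel--Cantelli over $n$). The one structural difference is that you take $T$ \emph{fully independent} runs of $C$, whereas the paper feeds $C$ seeds from a pairwise-independent generator, producing only $2^{O(s)}$ deterministic forests per circuit and union-bounding over those. Your iid route also works and is arguably cleaner: once you know that the indicators ``$x$ is hit'' are negatively correlated under iid sampling, Chebyshev gives $\Pr[\text{$\ge k/2$ distinct}] = 1 - o(1)$, and then $\Pr_{\bmcO}[G \text{ samples well}] \le (\text{HAL bound})/\Omega(1)$ directly, with no derandomization step. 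The paper's pairwise route trades this concentration argument for an explicit small seed space plus an extra union bound.

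Two details to tighten. First, your coupling argument (``$T\delta$ effectively uniform samples, hence $\Omega(T\delta)$ distinct because $|S|$ is large'') does not go through as stated: conditioning on $X_i = Y_i$ does \emph{not} leave $X_i$ uniform on $S$, so you cannot invoke birthday-style non-collision. The paper sidesteps coupling entirely and directly lower-bounds $\sum_{x \in S} \Pr[x \text{ is hit}]$ from the TV constraint; that computation is what you actually need. Second, YZ's quantum sampler is only $2^{-\Omega(n)}$-close to $\Unif(\mcC_{\bmcO})$, not exact; the paper handles this by taking $\mathcal{D}^{(n)}$ to be the quantum algorithm's \emph{actual} output distribution (sampled exactly by definition) and absorbing the $2^{-\Omega(n)}$ slack into the classical lower bound via the triangle inequality. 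Finally, note the HAL is not literally ``$2^{-\Omega(k)}$'': it bounds the probability of $(L+1)$-success by $(mL\cdot 2^{-(n-t)})^{\lfloor \ell/q\rfloor}$, so you need $k > L$ to trigger it, and the exponent is governed by $\ell/q$ rather than $k$---this does not affect your outline but does affect how you must set $T$ and $s$.
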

\end{tcolorbox}\medskip

\Cref{thm:main intro} provides  evidence that quantum circuits are far more powerful than classical ones: In sampling, quantumness can buy us both a dramatic reduction in circuit size as well as a dramatic improvement in accuracy. Furthermore, these advantages hold not only for specific, structured instances (that may have been tailored for a separation), but even for generic, unstructured ones; see~\cite{Bar16} for a broader discussion of random oracle separations. 

The analogue of~\Cref{thm:main intro} for functions remains open: It is not known if $\mathsf{BQP/poly} \ne\mathsf{P/poly}$ relative to a random oracle, and there are formal barriers to such a separation~\cite{AA14}.


\subsection{Related work}

\paragraph{Exact vs.~approximate sampling.} Previously Aaronson and Arkhipov showed~\cite[Theorem 34]{AA11}, under the assumption that the polynomial hierarchy ($\mathsf{PH}$) is infinite, that there is a distribution---the ``complete" boson distribution $\mathcal{D}^{(n)}$---that can be sampled exactly by a polynomial-size quantum circuit but cannot be sampled exactly by polynomial-size classical circuits. 

Our results are incomparable. On one hand,~\cite{AA11}’s holds under the assumption that $\mathsf{PH}$ is infinite whereas~\Cref{thm:main intro} is a relativized statement. On the other hand,~\cite{AA11}’s classical hardness only holds for exact sampling whereas~\Cref{thm:main intro}’s holds for approximate sampling, and even with near-maximal TV distance. Aaronson and Arkhipov pointed to this aspect of their result as its central drawback: ``Why are we so worried about this issue? One obvious reason is that noise, decoherence, photon losses, etc.~will be unavoidable features in any real implementation of a boson computer. As a result, not even the boson computer {\sl itself} can sample exactly from the distribution $\mathcal{D}$! So it seems arbitrary and unfair to require this of a classical simulation algorithm." Therefore, ``we ought to let our simulation algorithm sample from some distribution $\mathcal{D}'$ such that $\| \mathcal{D}-\mathcal{D}'\|\le \varepsilon$ (where $\| \cdot \|$ represents total variation distance), using $\mathrm{poly}(n,1/\varepsilon)$ time." 

There are several obstacles to such an approximate version of~\cite{AA11}'s result. We recap and discuss them in~\Cref{sec:AA obstacles}. We take a different approach and work with a different hard distribution. Our work therefore does not carry any immediate implications for the complexity of boson sampling.

\paragraph{Separations for constant-depth circuits.} Bravyi, Gosset, and K\"onig~\cite{BGK18} raised the analogue of~\Cref{q:main} for  constant-depth circuits (specifically, $\mathsf{QNC}^0$ and $\mathsf{NC}^0$), calling it a challenging open problem. Recent works~\cite{WP23,KOW24} answer this analogue, and in contrast to our result and~\cite{AA11}’s, these separations are unrelativized and unconditional. This mirrors the state of affairs for the circuit complexity of functions, where we have unconditional lower bounds against $\mathsf{NC^0}$ and $\mathsf{AC^0}$~\cite{FSS81,Ajt83,Yao85,Has86} but remain limited to relativized or conditional lower bounds against $\mathsf{P/poly}$.

\paragraph{Dequantizing quantum circuits.} There is a large body of work showing that various subclasses of quantum circuits  {\sl do} have classical counterparts.  Prominent examples include stabilizer circuits (via the Gottesman--Knill theorem~\cite{Got98,AG04}) and matchgate circuits (via Valiant's reduction to Pfaffians~\cite{Val02,TD02}). 

\subsection{Broader context: Representational vs.~algorithmic quantum advantage}

\Cref{q:main} is representational in nature: it asks about the {\sl existence} of classical counterparts of quantum samplers. This stands in contrast to the well-studied algorithmic question of {\sl constructing} a classical counterpart. We now elaborate on the distinction and discuss why separations in the algorithmic setting do not answer~\Cref{q:main}. 

A sampling problem is defined by a map from instances $C\in \{0,1\}^*$ to distributions $\mathcal{D}_{C}$. A classical (resp.~quantum) algorithm that solves the sampling problem receives $C$ as input and  constructs a classical (resp.~quantum) sampler for $\mathcal{D}_C$.\footnote{Alternatively, we can also define the algorithm to be one that receives $C$ as input and generates a  random sample from~$\mathcal{D}_{C}$. These definitions are equivalent by Cook--Levin, for the same reason that $\mathsf{P} = \mathsf{P}\text{-}\mathrm{uniform }\ \mathsf{P/poly}$.} The canonical sampling problem in this context, {\sc Circuit Sampling}, has $C$ being the description of a quantum circuit and $\mathcal{D}_C$ being its output distribution. {\sc Circuit Sampling} trivially admits an efficient quantum algorithm, and the question is whether it admits an efficient classical algorithm. An especially well-studied variant is {\sc Random Circuit Sampling}~\cite{BISBDJBMN18,BFNV19}, the average-case version where $\bC$ is drawn according to a suitable distribution over quantum circuits. 

The complexity of a sampling problem is determined jointly by the circuit complexity of sampling from $\mathcal{D}_C$ and the complexity of {\sl constructing} a sampler for $\mathcal{D}_C$ from $C$. An efficient algorithm for the sampling problem implies the existence of small samplers for $\mathcal{D}_C$ for every instance $C$, but the converse does not necessarily hold. 
Consequently, quantum advantage for  sampling problems conflate the following possibilities: 
\begin{quote}
    \noindent {\bf Possibility 1:} There is a polynomial-size quantum sampler with no equivalent polynomial-size classical counterpart. \ \medskip \ 
    
    \noindent {\bf Possibility 2:} Every polynomial-size quantum sampler has an equivalent polynomial-size classical counterpart, but it is hard to construct such a counterpart efficiently. 
\end{quote}

Both are interesting statements, but ideally we would isolate which is true. Separations such as~\Cref{thm:main intro} and~\cite{AA11}’s result isolate the former, thereby addressing~\Cref{q:main}.

\begin{remark}
Bene Watts and Parham~\cite{WP23} refer to this distinction between algorithmic and representational separations as the distinction between separations for ``input-dependent" problems (defined by a map from instances to distributions for each $n$, as in {\sc Circuit Sampling}) and ``input-independent" ones (defined by single distribution for each $n$, as in~\Cref{thm:main intro} and~\cite{WP23}'s result). 

Quoting the authors, ``While input-dependent problems ask about a classical
system's ability to {\sl process} information, input-independent problems instead study what distributions classical
systems can {\sl prepare}." Put differently, algorithmic separations show that more can be computed efficiently in the quantum world than in the classical  world, whereas representational separations can be seen as showing that the quantum world {\sl itself} is richer and more complex than the classical world. 
\end{remark}

\subsubsection{Three requirements for a representational separation} To establish quantum advantage in the representational setting, one needs the following: 

\medskip 

 \begin{tcolorbox}[colback = white,arc=1mm, boxrule=0.25mm]
\vspace{5pt} 

\begin{enumerate}[leftmargin=1.5em]
    \item {\bf Single distribution for each $n$.} A distribution ensemble $\mathscr{D} = \{ \mathcal{D}^{(n)}\}_{n\in \mathbb{N}}$ where $\mathcal{D}^{(n)}$ is a distribution over $\{0,1\}^n$ that depends only on $n$. For ease of reference we call this the ``one-per-slice" property. 
    
    \item[2.] {\bf Quantum easiness.} For all $n$, there is a polynomial-size quantum circuit that samples $\mathcal{D}^{(n)}$. 
\item[3.] {\bf Classical hardness.} For sufficiently large $n$, no polynomial-size classical circuit can approximately sample $\mathcal{D}^{(n)}$. Equivalently, no polynomial-time {\sl non-uniform} classical algorithm can approximately sample $\mathcal{D}^{(n)}$.
\end{enumerate}
\vspace*{-4pt}
\end{tcolorbox}\medskip

As discussed, sampling problems such as {\sc Circuit Sampling} and {\sc Random Circuit Sampling} do not satisfy the  one-per-slice property.

An important one-per-slice problem from the literature is {\sc Fourier Sampling}~\cite{Aar10,AA15}. This problem is specified by an oracle ensemble $\{ \mathcal{O}^{(n)} : \zo^n\to \zo\}_{n\in \N}$ and $\mathcal{D}^{(n)}$ is the Fourier distribution of $\mathcal{O}^{(n)}$. This problem is quantumly easy with a single query to $\mathcal{O}^{(n)}$. Existing random-oracle lower bounds apply to approximate sampling~\cite{AA15,AC17}, but only rule out classical uniform algorithms (i.e.~Turing machines), not non-uniform ones (i.e.~circuits). These lower bounds therefore again leave open the possibility that the $\mathcal{D}^{(n)}$'s of {\sc Fourier Sampling} all have small classical circuits for all oracle ensembles, and such circuits are just hard to construct. 


\begin{remark}[Salting~\cite{CDGS18}]
Salting is a generic technique for upgrading random-oracle lower bounds against uniform algorithms to ones against non-uniform algorithms. The basic idea is to consider $k$ independent random oracles $\bmcO_1,\ldots, \bmcO_k$ instead of a single one. The algorithm is given as input an index $i\in [k]$ and is  asked to solve the computational problem at hand relative to $\bmcO_i$. It can be shown that if a problem is hard against uniform algorithms, then the “$k$-salted version” for an appropriate choice of $k$ is hard against non-uniform algorithms. 

Applied to {\sc Fourier Sampling}, however, this turns a single hard distribution $\mathcal{D}^{(n)}$ into $k$ many distributions $\mathcal{D}^{(n)}_1,\ldots,\mathcal{D}^{(n)}_k$. The problem that is hard against non-uniform algorithms is: Given $i\in [k]$, output a classical sampler for $\mathcal{D}^{(n)}_i$. This salted problem is now a sampling problem defined by the map $i \mapsto \mathcal{D}^{(n)}_i$ and no longer satisfies the one-per-slice property. We are back to square one: Hardness for this problem leaves open the possibility that all the $\mathcal{D}^{(n)}_i$'s have small classical circuits, and such circuits are just hard to construct even with advice. 
\end{remark}

\begin{remark}[Uniform quantum easiness; non-uniform classical hardness]
Our distribution ensemble in~\Cref{thm:main intro} is {\sl explicit} in the sense that there is a polynomial-time (classical) uniform algorithm that takes as input $1^n$ and outputs a quantum circuit that exactly samples $\mathcal{D}^{(n)}$. Our quantum easiness is therefore witnessed by a uniform algorithm whereas our classical hardness rules out non-uniform algorithms. In other words,~\Cref{thm:main intro} in fact proves  ``$\mathsf{BQP}^{\bmcO}\not\subseteq \mathsf{P/poly}^{\bmcO}$ for distributions", with $\mathsf{BQP}$ instead of $\mathsf{BQP/poly}$.
\end{remark}

\section{Technical Overview}

\paragraph{Quick overview of Yamakawa--Zhandry.}  Our starting point is the work of Yamakawa and Zhandry~\cite{YZ22} who introduced an $\mathsf{NP}$ {\sl search} problem, {\sc NullCodeword},  and a quantum vs.~classical query separation for it.  In {\sc NullCodeword}, $\mathcal{C} \sse \Sigma^n$ is a code (satisfying certain list-recoverable and list-decodable properties) and the algorithm is given query access to an oracle $\mathcal{O}: \Sigma \to\zo$. Its goal is to find a codeword whose coordinates all map to $0$ under $\mathcal{O}$, i.e.~an element of the set: 
\[ \mathcal{C}_{\mathcal{O}} \coloneqq \{ c \in \mathcal{C} \colon \mcO(c_1) = \cdots =\mcO(c_n) = 0\}.\]
\cite{YZ22} proved that {\sc NullCodeword} satisfies:

\begin{description}
    \item[Quantum easiness.] There is a $\poly(n)$-query quantum algorithm that solves {\sc NullCodeword} w.h.p.~relative to a random oracle $\bmcO$. In fact, there is a $\poly(n)$-query quantum algorithm that samples $\mathrm{Unif}(\mcC_{\bmcO})$ w.h.p.~relative to $\bmcO$ (\Cref{thm:YZ quantum easiness}).  
    
\item[Classical hardness.] Every $2^{n^{\Omega(1)}}$-query classical algorithm for {\sc NullCodeword} fails w.h.p.~relative to $\bmcO$. 
\end{description}

\subsection{Overview of our approach}

Setting aside the distinction between search and sampling for now, recall that lower bounds against $q$-query algorithms straightforwardly give  oracle lower bounds against all time-$q$ uniform algorithms, but not all size-$q$ circuits. For the former, it suffices to identify, for every time-$q$ uniform algorithm, a sufficiently large input length $n$ on which it fails. For the latter we need a stronger statement with the order of quantifiers switched: There is a sufficiently large $n$ on which {\sl all} size-$q$ circuits fail.  

One way of achieving the latter is to prove that $q$-query algorithms succeed with {\sl tiny} probability ($\le q^{-\Omega(q)}$), so small that we can afford a union bound over all size-$q$ circuits. {\sc NullCodeword} cannot satisfy this: The trivial algorithm that simply outputs a uniform random element of $\Sigma^n$ succeeds with probability $\ge 2^{-\Theta(n)}$. We want to be able to handle $q$ being any $\poly(n)$, ideally even $2^{n^{\Omega(1)}}$.

We obtain our result in two steps, the first of which is a hardness amplification lemma for {\sc NullCodeword}. We consider a variant that we call {\sl $k$-fold {\sc NullCodeword}} (and denote as $\textsc{NullCodeword}^{\otimes k}$), where the goal is to output $k$ distinct elements of $\mathcal{C}_{\bmcO}$. We show that this is indeed a much harder problem than {\sc NullCodeword}---by choosing $k$ to be sufficiently large, we can drive the success probability of any $q$-query algorithm down to $\le q^{-\Omega(q)}$. We then show how this extreme hardness of $\textsc{NullCodeword}^{\otimes k}$ translates into similar hardness of sampling from the solutions space of {\sc NullCodeword}, i.e.~sampling $\mathrm{Unif}(\mcC_{\bmcO})$. In preserving the tiny $\le q^{-\Omega(q)}$ success probability of $q$-query algorithms, we are able to union bound over all size-$q$ circuits and show that they must all fail at sampling $\mathrm{Unif}(\mcC_{\bmcO})$. This along with the quantum easiness of sampling $\mathrm{Unif}(\mcC_{\bmcO})$ shown in~\cite{YZ22} yields~\Cref{thm:main intro}. We now detail the two steps. 



\subsection{$k$-fold {\sc NullCodeword} and its hardness}

Every search problem admits a $k$-fold version as defined above. However, this version may not be any harder than the original problem itself, since the solutions could be related in a way that having found one makes it easier to find another. Our hardness amplification lemma shows that this is not the case for $\textsc{NullCodeword}$: It formalizes a sense in which having found many  solutions does not help much in finding another. 

As it turns out, for the connection to sampling we have to defeat a computational model that is stronger than the standard query model. Rather than a single query algorithm that outputs a $k$-tuple in $\Sigma$, our query model for solving $\textsc{NullCodeword}^{\otimes k}$ is a {\sl family} $\mathcal{F}$ of $m\gg k$ query algorithms, each of which outputs a single element of $\Sigma$. We say that $\mathcal{F}$ {\sl $k$-succeeds} on an oracle $\mcO$ if the $m$ query algorithms, when each run on $\mcO$, collectively finds at least $k$ distinct solutions (\Cref{def:k-succeeds}). See~\Cref{fig:succeed} for an illustration. \medskip

\begin{figure}[h]
    \centering
    \begin{tikzpicture}[x=1cm,y=1cm]

\def\labels{$c^{(1)} \in \mcC_{\mcO}$, $c^{(2)} \in \mcC_{\mcO}$,$c^{(1)} \in \mcC_{\mcO}$, $c^{(3)} \notin \mcC_{\mcO}$, $c^{(4)} \in \mcC_{\mcO}$}

\def\w{2.5}      
\def\h{3.2}      
\def\gap{0.8}    
\def\cs{0.10}    
\def\labeldrop{-0.35}      
\def\treelinewidth{0.8pt}  

\tikzset{
  tree/.style={line width=\treelinewidth},          
  pathzig/.style={dotted, line width=0.9pt},
  leafdot/.style={circle, fill, inner sep=0pt, minimum size=3pt}
}

\newcommand{\crosspoint}[6]{%
  \coordinate (#6L) at ($#1!#4!#3$);%
  \coordinate (#6R) at ($#2!#4!#3$);%
  \pgfmathsetmacro{\alphacoeff}{\cs + (1 - 2*\cs)*#5}%
  \coordinate (#6)  at ($(#6L)!\alphacoeff!(#6R)$);%
}

\newcommand{\IfNonempty}[3]{%
  \if\relax\detokenize{#1}\relax #2\else #3\fi
}

\foreach \lab [count=\i from 1] in \labels {
  \pgfmathsetmacro{\x}{(\i-1)*(\w+\gap)}
  \coordinate (Ai) at (\x,0);
  \coordinate (Bi) at (\x+\w,0);
  \coordinate (Ci) at (\x+0.5*\w,\h);

  \draw[tree] (Ai) -- (Ci) -- (Bi) -- cycle;

  \pgfmathtruncatemacro{\m}{mod(\i,5)}
  \pgfmathsetmacro{\leafq}{0.15 + 0.175*(\m)}
  \pgfmathsetmacro{\leafcoeff}{\cs + (1 - 2*\cs)*\leafq}
  \coordinate (leaf\i) at ($(Ai)!\leafcoeff!(Bi)$);

  \crosspoint{(Ai)}{(Bi)}{(Ci)}{0.85}{0.30}{p\i1}
  \crosspoint{(Ai)}{(Bi)}{(Ci)}{0.60}{0.78}{p\i2}
  \crosspoint{(Ai)}{(Bi)}{(Ci)}{0.40}{0.22}{p\i3}
  \crosspoint{(Ai)}{(Bi)}{(Ci)}{0.20}{0.65}{p\i4}
  \draw[pathzig] (Ci) -- (p\i1) -- (p\i2) -- (p\i3) -- (p\i4) -- (leaf\i);

  \node[leafdot] at (leaf\i) {};

  \IfNonempty{\lab}{}{%
    \node at ($(leaf\i)+(0,\labeldrop)$) {\lab};   
  }
}

\end{tikzpicture}
    \caption{This family of $5$ query algorithms outputs $4$ elements of $\mcC_{\mcO}$, among which $3$ are distinct. It therefore $3$-succeeds on $\mcO$.}
    \label{fig:succeed}
\end{figure}

Our hardness amplification lemma for {\sc NullCodeword} is as follows: \medskip

\begin{tcolorbox}[colback = white,arc=1mm, boxrule=0.25mm]
\begin{lemma}
    \label{lem:hardness amplification intro}
    Let $\mcC$ be an $(\ell,L,t)$-list recoverable code (\Cref{def:list recoverability}) and $\mcF$ be a size-$m$ family of $q$-query algorithms. 
Then 
\[    \Prx_{\bmcO}[\mathcal{F}\text{ }(L+1)\text{-succeeds on }\bmcO ]\le (mL\cdot2^{-(n-t)})^{\floor{\ell/q}}.
    \]
\end{lemma}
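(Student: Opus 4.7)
The plan is to prove the lemma by a lazy-oracle chain-rule argument that iterates the $(\ell,L,t)$-list-recoverability hypothesis $T := \lfloor \ell/q\rfloor$ times. As a first reduction I would WLOG process the $m$ algorithms in $\mcF$ sequentially, allowing each later algorithm to see the queries and answers of all earlier ones (this can only raise the collective $(L+1)$-success probability). I would then reveal $\bmcO$ lazily, so that for any $s\in\Sigma$ not yet queried, $\bmcO(s)$ is a fresh uniform bit independent of the current transcript. At any moment let $Q_0\subseteq\Sigma$ be the currently-revealed zero-queries and set $\mathrm{Easy}(Q_0):=\{c\in\mcC : |\{i : c_i\in Q_0\}|\ge t\}$. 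Applying list-recoverability with the diagonal choice $S_1=\cdots=S_n=Q_0$ gives $|\mathrm{Easy}(Q_0)|\le L$ whenever $|Q_0|\le \ell$.

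The key per-step observation is the following. For any codeword $c\notin \mathrm{Easy}(Q_0)$, at least $n-t+1$ of its coordinates satisfy $c_i\notin Q_0$. For $c$ to belong to $\mcC_{\bmcO}$, each such coord must evaluate to $0$; any already-queried coord not in $Q_0$ must evaluate to $1$ (otherwise it would lie in $Q_0$), which would contradict $c\in\mcC_{\bmcO}$. Hence all of these coords must be unqueried, and by laziness their oracle values are independent uniform bits, so the conditional probability they all equal $0$ is at most $2^{-(n-t)}$. Summing over $c\notin\mathrm{Easy}(Q_0)$ (and using that any single algorithm outputs exactly one codeword, so the events ``$A_j$ outputs $c$'' partition probability mass) bounds by $2^{-(n-t)}$ the chance that a specified algorithm $A_j$ outputs a non-easy valid codeword.

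The $T$-th power then comes from iterating. I would prove by induction on $i=0,1,\dots,T$ the claim that the probability the process has found at least $L+i$ distinct valid codewords within any prefix of the run during which $|Q_0|\le iq$ is at most $(mL\cdot 2^{-(n-t)})^i$. The base case $i=0$ is immediate from list-recoverability. For the inductive step, I partition the run into ``chunks'' along stopping times at which $|Q_0|$ crosses the next multiple of $q$ (each algorithm contributes $\le q$ new zero-queries, so this yields at most $T$ chunks while $|Q_0|\le\ell$). Going from the $(i-1)$-st to the $i$-th chunk, the \emph{next} discovered valid codeword $c$ falls into one of two cases: (a) $c\notin\mathrm{Easy}(Q_0)$ at the moment of discovery, contributing $\le 2^{-(n-t)}$ by the per-step lemma and $m\cdot 2^{-(n-t)}$ after union-bounding over the $m$ algorithms; or (b) $c$ has just entered $\mathrm{Easy}$ via the new zero-queries, of which there are at most $L$ (by list-recoverability applied to the updated $Q_0$), each having $\ge n-t$ unqueried coords and hence probability $\le 2^{-(n-t)}$ of being in $\mcC_{\bmcO}$. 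Either case multiplies in an additional factor of $mL\cdot 2^{-(n-t)}$; setting $i=T$ gives the lemma.

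The main obstacle I anticipate is cleanly formalizing this induction. The algorithms are adaptive and $|Q_0|$ is itself a random variable, so the chunks above are stopping-time regions and the conditioning arguments across chunks need careful bookkeeping; in particular, ensuring that the $T$ ``chunk upset'' probabilities genuinely multiply---rather than merely combining via a weaker union bound---requires that at every conditioning point we still have $|Q_0|\le \ell$, so that list-recoverability remains in force and the $L$ factor can be re-extracted per chunk. Extracting the $L$ factor in each phase via the ``newly admitted to $\mathrm{Easy}$'' case (b) above, rather than just the plain $m$ from the algorithmic union bound, is the subtlest technical piece I expect to work out.
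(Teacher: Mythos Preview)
Your per-step bound is correct and matches the paper's single-tree lemma: conditioned on any partial oracle over a set $S$ with $|S| \le \ell - q$, a depth-$q$ tree outputs an element of $\mcC_{\bmcO} \setminus \mcC[S,t]$ with probability at most $L \cdot 2^{-(n-t)}$. The gap is in the iteration. Your induction target, ``$\ge L+i$ distinct valid codewords while $|Q_0| \le iq$'', at $i = T$ reads ``$\ge L+T$ codewords while $|Q_0| \le \ell$'', which is neither implied by $(L+1)$-success nor what the lemma asserts. More fundamentally, in your sequential processing \emph{all} $m$ trees contribute queries, so the revealed set can reach size $mq \gg \ell$; once it exceeds $\ell$, list-recoverability no longer bounds $|\mathrm{Easy}(Q_0)|$ and the per-chunk argument collapses. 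Your chunking by multiples of $q$ in $|Q_0|$ has no a priori relationship to where the $L+1$ discoveries land---they could all occur in the first chunk, or after the $T$-th---so there is no mechanism by which $(L+1)$-success forces $T$ ``expensive'' events, and your case~(b) (codewords ``just entering'' Easy) does not supply one.

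The paper does \emph{not} process all $m$ trees. It proves that whenever $\mcF$ $(L+1)$-succeeds on $\mcO$, one can greedily extract an ordered subsequence $T^{(i_1)}, \ldots, T^{(i_r)}$ of length $r = \lfloor \ell/q \rfloor$ such that each $T^{(i_j)}(\mcO)$ is a distinct element of $\mcC_{\mcO}$ not $t$-queried by $\bigcup_{a < j} \Path(T^{(i_a)}, \mcO)$. The greedy step succeeds because the query set of the \emph{already-chosen} trees has size at most $(r-1)q \le \ell$, so list-recoverability leaves at least one of the $L+1$ successful outputs outside the current $\mcC[\,\cdot\,,t]$; the queries of the unchosen trees are ignored entirely. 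Your per-step lemma then applies $r$ times to any fixed such subsequence, giving $(L \cdot 2^{-(n-t)})^r$, and a union bound over the $m^r$ possible subsequences supplies the factor $m^r$. This extraction of a short subsequence---so that the relevant query set stays below $\ell$ throughout---is the missing combinatorial idea in your sketch.
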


\end{tcolorbox}
\medskip

The crux of the proof of~\Cref{lem:hardness amplification intro} is the following definition: We say that an ordered sequence of query algorithms $T^{(1)},\ldots, T^{(r)}$ is {\sl well-spread} on an oracle $\mathcal{O}$ if for every $i \in [r]$,  the output of $T^{(i)}$ on $\mcO$,  denoted $T^{(i)}(\mcO)$, contains many coordinates that are unqueried by 
$T^{(1)},\ldots,T^{(i-1)}$ when run on $\mcO$. See~\Cref{def:well-spread} for the formal definition. Intuitively, this says that $T^{(i)}$ had to do substantial {\sl additional} work to find $T^{(i)}(\mcO)$, on top of the work that $T^{(1)},\ldots, T^{(i-1)}$ already put in to find $T^{(1)}(\mcO),\ldots,T^{(i-1)}(\mcO)$.  With this definition in hand we prove:

\begin{description}
    \item[\Cref{claim:unlikely well spread and all zeroes}:] Any sequence of query algorithms $T^{(1)},\ldots,T^{(r)}$ is extremely unlikely to be well-spread {\sl and} have their outputs to all map to $0^n$ under a random oracle~$\bmcO$ (i.e.~all land in $\mathcal{C}_{\bmcO})$. The probability that both events happen decays exponentially in $r$. Well-spreadness therefore buys us approximate independence.
    \item[\Cref{clm:well-spread-exists}:] If $\mcF$ is a family that $k$-succeeds on $\mcO$, there must exist a long sequence of  $T^{(i)}$'s within $\mcF$ that is well-spread on $\mcO$. 
\end{description}

\Cref{lem:hardness amplification intro} follows easily from~\Cref{claim:unlikely well spread and all zeroes,clm:well-spread-exists}. 

\begin{remark}[Comparison with direct product theorems]
The domain of $\textsc{NullCodeword}^{\otimes k}$ is the same as that of {\sc NullCodeword}: In both cases, the algorithm queries a single oracle~$\mcO$. This stands in contrast to the setting of ``direct product theorems" studied throughout complexity theory, where the $k$-fold direct product of a function $f : X \to Y$ is defined to be $f^{\otimes k} : X^{(1)} \times \cdots \times  X^{(k)} \to Y^{(1)} \times \cdots \times Y^{(k)},$  
\[ f^{\otimes k}(x^{(1)},\ldots,x^{(k)}) = (f(x^{(1)}),\ldots,f(x^{(k)})).\]
We are interested in the complexity of finding $k$ distinct solutions of a single search problem, whereas direct product theorems concern the complexity of solving $k$ independent instances of a problem. 
\end{remark}

\subsection{From $k$-fold {\sc NullCodeword} to sampling}
\label{sec:extract}

Finally, we lift the hardness of $\textsc{NullCodeword}^{\otimes k}$ to the hardness of approximate sampling from the solution space of $\textsc{NullCodeword}$.  Let $\bT$ be a randomized query algorithm that w.h.p.~over a random oracle $\bmcO$ samples a distribution that is non-trivially close to $\mathrm{Unif}(\mcC_{\bmcO})$, i.e.~one satisfying $d_{\mathrm{TV}}(\bT(\bmcO),\Unif(\mcC_{\bmcO})) \le 1-\eps$. We would be done if we can extract from $\bT$ a single family $\mcF$ of $m$ query algorithms that $k$-succeeds on $\bmcO$ with non-negligible probability, since this would contradict~\Cref{lem:hardness amplification intro}. We will not quite show this, but will instead  extract from $\bT$ a small {\sl collection} of families $\mcF^{(1)},\ldots,\mcF^{(u)}$ satisfying: For every $\mcO$ such that $d_{\mathrm{TV}}(\bT(\mcO),\Unif(\mcC_{\mcO})) \le 1-\eps$, there is an $i\in [u]$ such that $\mcF^{(i)}$ $k$-succeeds on $\mcO$. Our setting of parameters will be such that the success probability of~\Cref{lem:hardness amplification intro} is small enough to afford a union bound over the $u$ many families, so this suffices for our purposes.

The existence of this small collection is based on the following probabilistic fact:
\begin{fact}
\label{fact:many distinct}
If $\mcD$ is a distribution that is $1-\eps$ close to $\Unif(H)$, then $m$ independent draws from $\mcD$ will contain $k = \Omega(m\eps)$ many distinct elements of $H$ in expectation. 
\end{fact}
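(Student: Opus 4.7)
The plan is to pass from the total-variation hypothesis to a pointwise lower bound on $\mcD$ in terms of $\Unif(H)$, and then compute the expected number of distinct elements of $H$ appearing among the $m$ i.i.d.~samples via linearity of expectation combined with the standard bound $1-(1-p)^m \ge 1-e^{-mp}$.

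First I would invoke the identity $d_{\mathrm{TV}}(P,Q) = 1 - \sum_x \min(P(x), Q(x))$ to rewrite the hypothesis $d_{\mathrm{TV}}(\mcD, \Unif(H)) \le 1-\eps$ as $\sum_{x \in H} \min(\mcD(x), 1/|H|) \ge \eps$. Setting $p_x \coloneqq \min(\mcD(x), 1/|H|)$ produces an ``overlap'' subdistribution on $H$ satisfying $p_x \le 1/|H|$, $\sum_{x \in H} p_x \ge \eps$, and $\mcD(x) \ge p_x$ pointwise. Next, letting $X_1,\dots,X_m$ be i.i.d.~draws from $\mcD$ and $N$ the number of distinct elements of $H$ appearing in the sample, the probability that any fixed $x \in H$ appears is at least $1-(1-\mcD(x))^m \ge 1-(1-p_x)^m \ge 1-e^{-mp_x}$, so linearity of expectation yields
\[
\mathbb{E}[N] \;\ge\; \sum_{x \in H}\bigl(1-e^{-mp_x}\bigr) \;\ge\; (1-e^{-1})\sum_{x\in H} \min(m p_x, 1),
\]
where the last inequality uses the elementary bound $1-e^{-t} \ge (1-e^{-1})\min(t,1)$ for $t \ge 0$.

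In the regime $m \le |H|$ we have $m p_x \le m/|H| \le 1$, so $\min(mp_x,1) = mp_x$, whence $\mathbb{E}[N] \ge (1-e^{-1})\, m \sum_{x \in H} p_x \ge (1-e^{-1})\,m\eps = \Omega(m\eps)$. The one place to be careful is precisely this regime issue: if $m\eps$ were to exceed $|H|$ the conclusion would be false as stated, since trivially $N \le |H|$, so Fact~\ref{fact:many distinct} is to be read with the implicit assumption $m \lesssim |H|$---easily met in the intended application, where $|H| = |\mcC_{\bmcO}|$ is exponentially large. Beyond this minor caveat the argument is a routine chain of TV-distance manipulations and Bernoulli-style bounds, with no new machinery required; the ``hard'' step is really just the observation that the overlap $p_x$ is small enough ($\le 1/|H|$) that the coupon-collector approximation $1-e^{-mp_x} \approx mp_x$ is tight in the relevant regime.
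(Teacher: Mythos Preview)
Your proof is correct, and its skeleton matches the paper's: both use the identity $d_{\mathrm{TV}}(\mcD,\Unif(H)) = 1 - \sum_{x\in H}\min(\mcD(x),1/|H|)$ to extract $\sum_{x\in H}\min(\mcD(x),1/|H|)\ge \eps$, then lower bound the per-element hitting probability by a quantity of the form $c\cdot\min(m\mcD(x),1)$, sum via linearity, and invoke $m\le |H|$ to conclude. The paper also makes the assumption $|H|\ge m$ explicit, so your caveat is exactly right.

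The one substantive difference is \emph{how} the per-element hitting probability is bounded. You use the exact independent-sample formula $\Pr[x\text{ hit}] = 1-(1-\mcD(x))^m \ge 1-e^{-m\mcD(x)}$, whereas the paper instead applies the second-moment (Paley--Zygmund) inequality to $\bZ=\sum_i \Ind[\bx^{(i)}=x]$ to obtain $\Pr[\bZ>0]\ge \E[\bZ]^2/\E[\bZ^2]\ge \tfrac12\min(1,m\mcD(x))$. Your route is slightly more direct for fully independent draws and gives a marginally better constant ($1-e^{-1}$ versus $1/2$). The paper's route, however, only uses the first two moments of $\bZ$ and therefore works verbatim under \emph{pairwise} independence---which is precisely what the paper needs downstream (\Cref{claim:pairwise-hits-often}) to keep the seed length $O(\lambda)$ and hence the number of forests $2^{O(s)}$ small enough for the final union bound.
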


A naive application of~\Cref{fact:many distinct} results in a collection of families that is too large for our purposes. To get around this, we show that~\Cref{fact:many distinct} continues to hold if the $m$ draws are only pairwise independent rather than fully independent (\Cref{claim:pairwise-hits-often}). We use this claim along with known randomness-efficient constructions of pairwise independent random variables.

\begin{remark}
    Aaronson showed an equivalence between search and approximate sampling for the algorithmic setting~\cite{Aar14}, but his techniques do not apply to the representational setting. Our approach, which relates the $k$-fold version of any search problem to the query complexity of sampling its solution space, could be helpful for other problems in the representational setting.
\end{remark}

\subsection{Obstacles in extending~\cite{AA11}'s separation to the approximate setting}
\label{sec:AA obstacles}

Aaronson and Arkhipov obtained their representational separation for exact sampling in two steps: 

\begin{description}
    \item[Step 1:] They first proved an algorithmic separation for the exact {\sc BosonSampling} problem. Here the algorithm is given as input the description of a boson distribution and is asked to generate samples from it. (See~\cite[Section 3]{AA11} for the formal setup.) They showed, under the assumption that $\mathsf{PH}$ is infinite, that no classical algorithm can solve {\sc BosonSampling} exactly in polynomial time.
        \item[Step 2:] They then strengthened this algorithmic separation to a representational one~\cite[Theorem 34]{AA11}. They did so by showing, for each $n\in \N$, the existence of a  specific boson distribution $\mathcal{D}^{(n)}$ that is ``complete" in the sense of being the hardest one (under nondeterministic reductions).  This enabled them to lift the hardness of {\sc BosonSampling} to the representational setting, by considering the one-per-slice version of {\sc BosonSampling} where each slice comprises only of this hardest distribution.  
\end{description}

\cite{AA11} gave two very different proofs of Step 1. The first is based on the $\#{\mathsf{P}}$-completeness of {\sc Permanent}~\cite{Val79} and the second on a theorem of Knill, Laflamme, and Milburn~\cite{KLM01}. It is the second proof that they build on for Step 2. 

The bulk of~\cite{AA11}'s paper laid out a detailed program, based on two still-unproven conjectures---the Permanent of Gaussians Conjecture and the Permanent Anti-Concentration Conjecture---for extending their first proof of Step 1, the one based on {\sc Permanent}, to the approximate setting. They remarked that they ``do not know how to generalize the second proof to say anything about the hardness of approximate sampling". Therefore, even assuming both conjectures, their approach does not give a representational separation for approximate sampling. 

\section{Discussion}

Many problems in areas spanning computer science, statistics, and machine learning are distributional in nature, where algorithms are expected to succeed on instances drawn according to an unknown distribution~$\mcD$. Naturally, we would like to make as mild an assumption regarding $\mcD$ as possible, the mildest one arguably being just that  $\mcD$ is samplable by a polynomial-size circuit. 

Separation such as ours give formal evidence that the class $\mathscr{D}_{\mathrm{quantum}}$ of distributions samplable by polynomial-size {\sl quantum} circuits  is far richer than the corresponding classical class $\mathscr{D}_{\mathrm{classical}}$. Can these separations be used to show that certain distributional problems become much harder when the algorithm is expected to succeed with respect all $\mcD\in \mathscr{D}_{\mathrm{quantum}}$  instead of $\mathscr{D}_{\mathrm{classical}}$? That is, are these problems much harder in the quantum world compared to the classical one?

\section{Preliminaries}

\paragraph{Basic notation and writing conventions. }
We write $[n]$ to denote the set $\{1,2,\ldots,n\}$. We use \textbf{boldface} letters, e.g.~$\bx,\bmcO$, to denote random variables. We write $\Unif(S)$ to denote the uniform distribution over the set $S$. Given a distribution $\mcD$ over $\zo^n$ and a point $x\in \zo^n$, we let $\mcD(x)\coloneqq \Prx_{\by\sim \mcD}[x=\by]$.
\begin{definition}[TV distance]
    \label[definition]{def:TV}
    For any distributions $\mcD_1$ and $\mcD_2$ over the same finite domain $X$, the \emph{total variation distance (TV distance)} between $\mcD_1$ and $\mcD_2$ is the quantity
    \begin{equation*}
d_{\mathrm{TV}}(\mcD_1,\mcD_2) =  \frac{1}{2}\cdot \sum_{x \in X} \abs*{\mcD_1(x) - \mcD_2(x)}.
    \end{equation*}
\end{definition}

\paragraph{Codes.} We work with codes $\mcC$ over the domain $\Sigma_1 \times \cdots\times \Sigma_n$ where the alphabets $\Sigma_i$'s are disjoint. 
We write $\Sigma \coloneqq \Sigma_1 \cup \cdots \cup \Sigma_n$ to denote the set of all possible symbols.

\begin{definition}[Codewords that are $t$-queried by $S$]
\label{def:t-queried}
    Let $\mathcal{C}\sse \Sigma_1\times\cdots\times \Sigma_n$ be a code and $S\sse \Sigma$ be a set of symbols. We define the set of codewords that are \emph{$t$-queried by $S$} to be
    $$
    \mathcal{C}[S,t]=\{c\in\mathcal{C}: |\{ c_1,\ldots,c_n\} \cap S|\ge t\}.
    $$
\end{definition}

\begin{definition}[List Recoverability]
\label{def:list recoverability}
We say that a code $\mathcal{C}\sse \Sigma_1\times\cdots\times \Sigma_n$ is $(\ell, L,t)$-list recoverable if for any set $S\sse \Sigma$ of symbols of size at most $\ell$, we have
    $|\mcC[S,t]|\leq L.$
\end{definition}

The standard notion of list-recoverability bounds the number of symbols in each block by $\ell$ rather than the total number of symbols across all blocks by $\ell$. \Cref{def:list recoverability} is slightly more convenient for our purposes. These definition are equivalent up to a factor of $n$, which will be negligible for the parameter regimes that we work in.

\pparagraph{Properties of the code that \cite{YZ22} uses.}
We will use the same code construction as \cite{YZ22}, and therefore, their quantum upper bound will apply as-is to our hard problem. These codes satisfy the following properties which we use.
\begin{fact}[Suitable codes; Lemma~4.2 of \cite{YZ22}]\label{fact:code-exists}
    For any constants $0<c<c'<1$, there exists an explicit linear code $\mathcal{C}\sse \Sigma_1\times\cdots\times \Sigma_n$ such that $|\Sigma|= 2^{\Theta(n\log n)}$ and $|\mcC|\geq 2^{a n}$ for some constant $a>1$. Furthermore, 
    \begin{enumerate}
        \item \label{fact:code-exists--item:list-recoverable}$\mcC$ is $(\ell, L, t)$-list recoverable where $t=(1-\zeta)n$ for some constant $0<\zeta<1$, $\ell=2^{\Theta(n^c)}$, and $L=2^{O (n^{c'})}$; and
        \item  \label{fact:code-exists--item:hw} For all $l\in[n-1]$,
        \begin{equation*}
            \Prx_{\bc \sim \Unif(\mcC)}[\hw(\bc)=n-l]\leq \left(\frac{n}{|\Sigma|}\right)^l.
        \end{equation*}
        where $\hw(\bc)$ denotes the Hamming weight of $\bc$.
    \end{enumerate}
\end{fact}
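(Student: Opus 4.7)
Since this statement is imported verbatim as Lemma~4.2 of~\cite{YZ22}, my plan is to use their code construction rather than design a new one. The construction is a concatenated code: an outer folded Reed--Solomon code over a polynomially sized field, followed by an inner coordinatewise injection that renames each outer symbol into a disjoint per-coordinate alphabet $\Sigma_i$ of size $2^{\Theta(n \log n)}$. Taking a single ambient set of size $2^{\Theta(n \log n)}$ and partitioning it into $n$ pieces makes the $\Sigma_i$ disjoint at no cost. Linearity and the size bound $|\mathcal{C}| \geq 2^{a n}$ follow from choosing the dimension of the outer code appropriately.

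For item~1, I would appeal to the list-recoverability of folded Reed--Solomon codes (Guruswami--Rudra, and later refinements): with constant agreement $t = (1-\zeta)n$ and input lists of total size $\ell = 2^{\Theta(n^c)}$, the output list size is at most $\ell^{O(1/\zeta)} = 2^{O(n^{c'})}$ for any $c' > c$, once $\zeta$ and the folding parameter are set appropriately. The coordinatewise renaming is invisible to list-recoverability, since a symbol in $\Sigma_i$ uniquely determines the outer symbol at position $i$, so the $(\ell, L, t)$ guarantee transfers unchanged to the concatenated code. For item~2, I would designate a ``zero'' inner symbol in each $\Sigma_i$ and argue via a union bound that the fraction of codewords hitting exactly $l$ of these zero symbols is at most $\binom{n}{l}/|\Sigma|^{l}$, using the near-MDS weight structure of Reed--Solomon to conclude that the marginal of any fixed coordinate under a uniform codeword is close to uniform on $\Sigma_i$. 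Simplifying $\binom{n}{l}/|\Sigma|^{l} \leq (n/|\Sigma|)^l$ gives the stated bound.

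The one genuinely delicate step is threading the parameters so that the inequality $c < c'$ is strict while simultaneously $|\Sigma|$ is large enough that $(n/|\Sigma|)^{l}$ is useful in the later analysis (in particular, small enough that the $L \cdot 2^{-(n-t)}$ factor appearing in \Cref{lem:hardness amplification intro} is genuinely tiny). This balancing is carried out carefully in \cite[Section~4]{YZ22}, and it is the only part of the argument that is not a direct invocation of standard coding-theoretic tools; everything else is mechanical once the outer code and the renaming are fixed.
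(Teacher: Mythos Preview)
Your proposal is correct and matches the paper's approach: the paper does not prove this fact but simply imports it as Lemma~4.2 of~\cite{YZ22}, noting in one sentence that \cite{YZ22} invoke the Guruswami--Rudra list-recoverability bounds~\cite{GR08,Rud07}. Your sketch of the underlying folded Reed--Solomon construction and the union-bound argument for the weight distribution is a faithful summary of what \cite{YZ22} actually do, and is in fact more detailed than anything the present paper provides.
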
 \cite{YZ22} apply list recoverability bounds of Guruswami and Rudra~\cite{GR08, Rud07} to prove the existence of codes satisfying the above criteria. Using additional properties of the code $\mcC$,~\cite{YZ22} construct a quantum sampler for $\mcC_{\bmcO}$.
\begin{theorem}[Quantum easiness of sampling $\Unif(\mcC_{\bmcO})$]
\label{thm:YZ quantum easiness}
    For any $n \in \N$, there exists a $\poly(n)$-time quantum algorithm that, given oracle access to a random oracle $\bmcO$, with high probability over $\bmcO$, generates samples from a distribution $\mcD$ satisfying $\dtv(\mcD,\Unif(\mcC_{\bmcO})) \leq 2^{-\Omega(n)}$.
\end{theorem}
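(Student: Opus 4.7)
The plan is to invoke the Yamakawa--Zhandry quantum sampling construction, which is tailored to codes $\mcC$ satisfying the two properties in Fact~\ref{fact:code-exists}. First, for each block $i \in [n]$ I would prepare
\[
|\psi_i\rangle \propto \sum_{s \in \Sigma_i,\, \mcO(s)=0} |s\rangle
\]
using $O(1)$ oracle queries per block (Grover-style state preparation suffices, since on a random $\bmcO$ a $1/2$-fraction of $\Sigma_i$ is mapped to $0$). Tensoring yields $|\psi\rangle = \bigotimes_i |\psi_i\rangle$, a uniform superposition over oracle-zero tuples in $\Sigma_1 \times \cdots \times \Sigma_n$. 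Because the alphabets are disjoint, each codeword in $\mcC_{\bmcO}$ corresponds to a unique such tuple.

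Second, I would coherently project onto the codeword subspace $\mcC$. Since $\mcC$ is linear, the projector $\Pi_\mcC$ is implementable via the QFT over $\mcC^\perp$. The amplitude $\|\Pi_\mcC |\psi\rangle\|$ equals, up to typical fluctuations of the block densities $|S_i|$ around $|\Sigma_i|/2$, the fraction $|\mcC| / \prod_i |\Sigma_i|$; the weight-concentration bound in item 2 of Fact~\ref{fact:code-exists} combined with $|\mcC| \geq 2^{an}$ shows this is at least inverse-polynomial in the relevant regime. A $\poly(n)$-round amplitude amplification then prepares a state supported on $\mcC_{\bmcO}$, and measuring in the computational basis produces a sample that, in the idealized analysis, is distributed exactly as $\Unif(\mcC_{\bmcO})$.

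The main obstacle is controlling the TV error down to $2^{-\Omega(n)}$. Several sources of slack must be tracked simultaneously: atypicality of the random oracle (some blocks may have $|S_i|$ far from $|\Sigma_i|/2$), imperfect block-state preparation, approximate implementation of $\Pi_\mcC$, and finite-round amplitude amplification error. List-recoverability (item 1 of Fact~\ref{fact:code-exists}) is the central lever here: it forbids pathological oracles under which too many codewords cluster on a small zero-symbol set, so with probability $1 - 2^{-\Omega(n)}$ over $\bmcO$ all of the approximations can be driven to $2^{-\Omega(n)}$ accuracy while keeping the query and gate complexity $\poly(n)$. Tying the combinatorial list-recoverability bound quantitatively to an exponentially small TV bound on the measurement distribution is the technical heart of the argument, and is where one would follow \cite{YZ22}'s analysis essentially verbatim.
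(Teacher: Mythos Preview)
The paper does not prove this theorem; it is imported as a black box from \cite{YZ22}, with the remark that the sampler relies on ``additional properties of the code $\mcC$'' beyond Fact~\ref{fact:code-exists}. So there is no paper-side argument to match. Your first step (preparing $|\psi\rangle=\bigotimes_i|\psi_i\rangle$ uniform over the per-block zero sets) is indeed how \cite{YZ22} begins, but your second step has a real gap: the claim that $\|\Pi_{\mcC}|\psi\rangle\|$ is inverse-polynomial is false for these parameters. Since $|\psi\rangle$ is uniform over $V_1\times\cdots\times V_n$, one has $\|\Pi_{\mcC}|\psi\rangle\|^2 = |\mcC_{\mcO}|/\prod_i |V_i| \approx |\mcC|/\prod_i|\Sigma_i|$, the rate of $\mcC$. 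With $|\Sigma_i|\approx |\Sigma|/n = 2^{\Theta(n\log n)}$ and $|\mcC| = 2^{\Theta(n)}$ this rate is $2^{-\tilde{\Theta}(n^2)}$, so amplitude amplification would require $2^{\tilde{\Theta}(n^2)}$ rounds rather than $\poly(n)$. Neither item of Fact~\ref{fact:code-exists} rescues this: the weight bound governs how codewords distribute over Hamming shells, not the rate, and a tiny rate is in fact \emph{forced} by list-recoverability---the very feature that drives classical hardness is precisely what kills ``project and amplify.''

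What \cite{YZ22} actually do avoids ever paying this exponentially small overlap. They exploit further algebraic structure of the specific (folded Reed--Solomon) code: roughly, a block-wise QFT turns the product state into one that can be analyzed against the \emph{dual} code, and a suitable dual-distance property guarantees the resulting state is $2^{-\Omega(n)}$-close to the uniform superposition over $\mcC_{\bmcO}$, which is then measured. List-recoverability of $\mcC$ plays essentially no role on the quantum-easiness side; you have attributed it to the wrong half of the separation.
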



\paragraph{Decision trees and forests.}
\label{par:trees-forests}
We consider decision trees (i.e.~query algorithms) and forests (i.e.~families of query algorithms) with internal nodes labeled by symbols in $\Sigma$ and leaves that output an element of $\Sigma_1 \times \cdots \Sigma_n$. For a decision tree $T$ and oracle $\mcO : \Sigma \to \zo$ we write $T(\mathcal{O})$ to the output of $T$ when branches are taken according to $\mcO$.  We write $\Path(T,\mcO) \sse \Sigma$ to be the set of symbols that $T$ queries on the path determined by $\mcO$.

We will assume, without loss of generality, that all trees are \emph{valid}: 
\begin{definition}[Valid trees]
    \label{def:valid-tree}
    A tree $T$ is \emph{valid} if, on any oracle $\mcO$, the output $c = T(\mcO)$ is fully queried, meaning $c_i \in \Path(T,\mcO)$ for all $i \in [n]$.
\end{definition}
The reason we may assume validity without loss of generality is that any $T'$ that is not valid can be easily converted to an equivalent valid tree $T$: Specifically, $T(\mcO)$ first simulates $c = T'(\mcO)$, then queries $c_1,...,c_n$, and finally outputs $c$. The result is that $T$ has equivalent input/output behavior to $T$, is valid, and makes at most $n$ additional queries. Since our lower bounds are against trees making well more than $n$ queries, this additive factor of $n$ only affects the constant in our asymptotic statements.


\paragraph{The connection between oracle circuits and trees}
\begin{definition}[Oracle circuit]
    A size-$s$ \emph{oracle circuit} $G$, is a generalization of a size-$s$ circuit that is allowed to have \textsc{Oracle} gates in addition to the standard \textsc{And}, \textsc{Or}, and \textsc{Not} gates. Given an oracle $\mcO$ and input $x$, the output $G^{\mcO}(x)$ is computed by using $x$ as the input to $G$ and replacing all \textsc{Oracle} gates with calls to $\mcO$.
\end{definition}
We will use a standard connection between oracle circuits and query algorithms.
\begin{fact}
    \label{fact:connect-oracle-tree}
    For any size-$s$ oracle circuit $G$ and input $x$, there exists a depth-$s$ tree $T_x$ such that $G^{\mcO}(x) = T_x(\mcO)$ for every oracle $\mcO$.
\end{fact}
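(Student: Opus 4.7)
The plan is to construct $T_x$ by simulating $G^{\mcO}(x)$ along a topological order of the oracle gates. Fix any topological order $g_1,\ldots,g_k$ of the oracle gates of $G$, where $k \le s$ since there are at most $s$ gates in total. Because $g_1$ precedes every other oracle gate, its input wire value is determined by the (non-oracle) gates that feed into it from $x$ alone; let $\sigma_1(x) \in \Sigma$ denote that input symbol, which is computable from $x$ without querying $\mcO$. The root of $T_x$ queries $\sigma_1(x)$ and branches on the response $b_1 \in \{0,1\}$.

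I then proceed inductively down the tree: in the subtree rooted at branch $b_1$, the input wire of $g_2$ is a deterministic function of $x$ and $b_1$, so we compute a symbol $\sigma_2(x,b_1) \in \Sigma$, query it, and branch on the response $b_2$. Continuing this way for all $k$ oracle gates, at depth $i$ we have recorded $(b_1,\ldots,b_i)$ and can compute the input $\sigma_{i+1}(x,b_1,\ldots,b_i)$ to $g_{i+1}$ deterministically. After processing all $k$ oracle gates we reach a leaf, at which point every oracle gate's output has been fixed, so $G^{\mcO}(x)$ is a deterministic function of $(x,b_1,\ldots,b_k)$; we hard-wire that value as the leaf's output. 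The resulting tree has depth $k \le s$.

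Correctness is immediate: on input $\mcO$, the execution of $T_x$ follows the unique branch with $b_i = \mcO(\sigma_i(x,b_1,\ldots,b_{i-1}))$ for each $i$, and by construction the leaf outputs the value $G^{\mcO}(x)$ would compute given those oracle responses. The only thing to verify is well-definedness of $\sigma_{i+1}(x,b_1,\ldots,b_i)$: this is exactly guaranteed by the topological ordering of oracle gates, which ensures that $g_{i+1}$'s input depends only on $x$ and the outputs of $g_1,\ldots,g_i$ (not on any later oracle gate). I do not expect any genuine obstacle here; the fact is the standard "oracle circuit to decision tree" conversion, and the depth bound $s$ is actually slack since the true bound is the number of oracle gates.
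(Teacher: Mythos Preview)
Your proposal is correct and is exactly the standard construction the paper has in mind; the paper's own justification is a single sentence (``The tree $T$ is constructed by replacing every oracle call of $G$ with a query in $T$''), and your topological-order simulation is simply the fleshed-out version of that remark.
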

The tree $T$ in \Cref{fact:connect-oracle-tree} is constructed by replacing every oracle call of $G$ with a query in $T$.

\section{Proof of \Cref{lem:hardness amplification intro}: Hardness amplification for {\sc NullCodeword}}
\label{sec:hardness amplification}

{Recall the following definitions from the introduction. In the search problem {\sc NullCodeword}, the goal is to find an element of the set $\mcC_{\mcO}$:

\begin{definition}[The set $\mcC_{\mcO}$]
    Let $\mathcal{C}\sse \Sigma_1 \times \cdots \times \Sigma_n$ be a code and $\mcO : \Sigma \to \zo$ be an oracle.  We define the set $\mcC_{\mcO}\sse \mcC$, 
    \[ \mcC_{\mcO} \coloneqq \{ c \in \mcC \colon \mcO(c) = 0^n\},  \]
    where $\mcO(c) \coloneqq (\mcO(c_1),\ldots,\mcO(c_n))$. 
\end{definition}

In this section, we prove our hardness amplification lemma for {\sc NullCodeword}. This lemma says that it is hard for a forest to find many distinct elements of $\mcC_{\mcO}$:

\begin{definition}[$k$-succeeds]
\label{def:k-succeeds}
Let $\mcC$ be a code and $\mcO$ be an oracle. Let $\mathcal{F}=\{T^{(1)},\ldots, T^{(m)}\}$ be a forest. We say that $\mcF$ \emph{$k$-succeeds on $\mcO$} if $\mathcal{F}(\mcO) = \{T^{(1)}(\mcO),\ldots, T^{(m)}(\mcO)\}$ outputs at least $k$ unique elements of $\mathcal{C}_\mcO$.
\end{definition}


\begin{lemma}[Restatement of \Cref{lem:hardness amplification intro}]
\label{lem:forest fails}
Let $\mcC$ be an $(\ell,L,t)$-list recoverable code and $\mcF$ be a size-$m$ forest of depth $q$. 
Then 
\[    \Prx_{\bmcO}[\mathcal{F}\text{ }(L+1)\text{-succeeds on }\bmcO ]\le (mL\cdot2^{-(n-t)})^{\floor{\ell/q}}.
    \]
\end{lemma}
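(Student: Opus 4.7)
The plan is to follow the two-claim structure sketched in the introduction. First I would bound, for any fixed ordered sequence of query algorithms, the probability that it is simultaneously well-spread \emph{and} has all its outputs lying in $\mcC_{\bmcO}$; then I would argue that the $(L+1)$-success event always witnesses such a sequence of length $\floor{\ell/q}$ inside $\mcF$, and conclude by a union bound. Formally, call a sequence $(T^{(1)},\dots,T^{(r)})$ \emph{well-spread on $\mcO$} if for every $i \in [r]$ the output $T^{(i)}(\mcO)$ has at least $n-t$ coordinates outside $S_{i-1} := \bigcup_{j<i}\Path(T^{(j)},\mcO)$.

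\emph{Step 1 (per-sub-sequence bound).} I would first prove by induction on $r \le \floor{\ell/q}$ that for any fixed ordered sequence $(T^{(1)},\dots,T^{(r)})$ of depth-$q$ query algorithms,
\[
  \Prx_{\bmcO}\!\bigl[\,\text{sequence is well-spread and each } T^{(i)}(\bmcO) \in \mcC_{\bmcO}\,\bigr] \le (L \cdot 2^{-(n-t)})^r.
\]
For the induction step I would use deferred decisions and condition on $\bmcO$ restricted to $S_{i-1}$. Since $|S_{i-1}| \le (i-1)q \le \ell$, list recoverability applied to the $\bmcO$-dependent set $S_{i-1} \cup \Path(T^{(i)},\bmcO)$ (of size $\le iq \le \ell$) implies that the output $T^{(i)}(\bmcO)$, whose $n$ coordinates all lie in this set, is drawn from $\mcC[S_{i-1} \cup \Path(T^{(i)},\bmcO), t+1]$~--- a list of at most $L$ codewords. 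For any well-spread candidate $c$ in this list, the event $c \in \mcC_{\bmcO}$ forces $\bmcO$ to equal $0$ on the $\ge n-t$ fresh coordinates of $c$ (independent uniform bits given the prior), contributing at most $2^{-(n-t)}$. Summing over the $\le L$ candidates gives the per-step factor $L\cdot 2^{-(n-t)}$, and chaining yields the bound.

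\emph{Step 2 (existence of a long well-spread sub-sequence, and conclusion).} Next I would show by a greedy construction that whenever $\mcF$ $(L+1)$-succeeds on $\mcO$, it contains a well-spread sub-sequence of length $\floor{\ell/q}$. Suppose inductively we have a well-spread prefix of length $k < \floor{\ell/q}$ whose outputs are $k$ distinct codewords of $\mcC_{\mcO}$. Each prefix codeword has all $n$ of its coordinates inside $S_k$, so these $k$ codewords all lie in $\mcC[S_k, t+1]$; by list recoverability (applicable since $|S_k| \le kq < \ell$), $|\mcC[S_k, t+1]| \le L$. Since $\mcF$ produces $\ge L+1$ distinct codewords in $\mcC_{\mcO}$, some distinct output $c$ is \emph{not} in $\mcC[S_k, t+1]$, meaning $|c \cap S_k| \le t$ (so $c$ is well-spread relative to $S_k$); picking any tree in $\mcF$ outputting $c$ extends the prefix while preserving well-spreadness. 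Union-bounding the per-sub-sequence bound of Step~1 over the $\le m^{\floor{\ell/q}}$ ordered sub-sequences of $\mcF$ of length $\floor{\ell/q}$ then yields the lemma's bound $(mL \cdot 2^{-(n-t)})^{\floor{\ell/q}}$.

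The hard part will be Step~1's per-step bound, because the ``list'' of $\le L$ candidate outputs from list recoverability depends on the random path $\Path(T^{(i)},\bmcO)$. Making this rigorous likely requires a careful leaf-by-leaf decomposition of the step probability, bounding each contributing leaf's reach probability via the well-spreadness condition (which forces $\ge n-t$ fresh queries answered $0$ on the path) and simultaneously bounding the distinct codeword outputs across leaves via list recoverability.
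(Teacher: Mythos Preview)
Your two-step structure---well-spread sequences rarely all succeed, and $(L{+}1)$-success forces a long well-spread sub-sequence---together with the final union bound over the $m^{\lfloor\ell/q\rfloor}$ ordered sub-sequences is exactly the paper's proof. Step~2 is the paper's greedy argument essentially verbatim, and your per-step target $L\cdot 2^{-(n-t)}$ in Step~1 is the correct inductive bound.

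The one real departure is the ``leaf-by-leaf'' plan you flag at the end for the per-step bound, and as written it does not go through. Bounding each contributing leaf's reach probability by $2^{-(n-t)}$ is fine, but ``bounding the distinct codeword outputs across leaves'' by $L$ via list recoverability fails: list recoverability controls $|\mcC[S,t]|$ only when $|S|\le\ell$, whereas the set of all symbols appearing anywhere in a depth-$q$ tree can have size up to $2^q{-}1\gg\ell$, so globally the tree may have far more than $L$ distinct leaf outputs. Combining a per-leaf $2^{-(n-t)}$ bound with any global count of leaves or leaf-labels therefore overshoots.

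The paper's resolution is not leaf-by-leaf but \emph{codeword-by-codeword with a stopping time and an expectation swap} (its \Cref{lem:main-lemma}). For each well-spread $c$ it introduces the event $Q(c,\bmcO)$ that $c$ becomes $t$-queried by $S_{i-1}\cup\Path(T^{(i)},\bmcO)$ at some point during execution, shows $\Prx[c\in\mcC_{\bmcO}\mid Q(c,\bmcO)]\le 2^{-(n-t)}$ (since $n-t$ coordinates of $c$ are still fresh at the stopping node), and then uses linearity of expectation: $\sum_c\Prx[Q(c,\bmcO)]=\Ex_{\bmcO}[\,|\{c:Q(c,\bmcO)\}|\,]\le L$, the last inequality being list recoverability applied to the \emph{single realized} set $S_{i-1}\cup\Path(T^{(i)},\bmcO)$, which does have size at most $\ell$. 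This expectation swap is what replaces your leaf-by-leaf idea; once you plug it in, the rest of your outline is the paper's proof.
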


\subsection{Hardness of $S$-conditioned {\sc NullCodeword}}
We begin with a helper lemma for our proof of~\Cref{lem:forest fails}. 

\begin{lemma}
\label{lem:main-lemma}
Let $\mcC$ be an $(\ell,L,t)$-list recoverable code. Let $T$ be a depth-$q$ tree. Let $S \sse \Sigma$ be a set of $\leq \ell - q$ symbols and $\mcO_S : S \to \zo$ be a partial oracle over $S$.  We have
    $$
    \Prx_{\bmcO}\big[T(\bmcO)\in \mcC_{\bmcO}\setminus\mathcal{C}[S,t]\mid \bmcO_S \equiv \mcO_S\big]\le L \cdot 2^{-(n-t)}.
    $$
\end{lemma}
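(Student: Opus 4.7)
My plan is to expand the probability as a weighted sum over leaves of $T$ and then union-bound carefully using list recoverability. Concretely, I would first ``collapse'' $S$-queries using $\mcO_S$ to view $T$ as a decision tree that only makes fresh random queries to $\Sigma\setminus S$. Each leaf $v$ has a query set $Q_v \coloneqq P_v\setminus S$, a response pattern $\alpha_v$, and an output $c_v$; conditional on $\bmcO_S=\mcO_S$, leaf $v$ is reached with probability $2^{-|Q_v|}$ (and $0$ if $\alpha_v$ disagrees with $\mcO_S$ on $P_v\cap S$). I would call a leaf \emph{good} if $c_v\in\mcC\setminus\mcC[S,t]$, $\alpha_v$ is zero on $\{c_{v,i}\}\setminus S$, and $\mcO_S$ is zero on $\{c_{v,i}\}\cap S$; validity (\Cref{def:valid-tree}) guarantees that reaching any good leaf realizes the event, so
\[
\Prx_{\bmcO}\bigl[T(\bmcO)\in\mcC_{\bmcO}\setminus\mcC[S,t]\mid\bmcO_S\bigr]
 \;=\; \sum_{v\ \text{good}} 2^{-|Q_v|}.
\]

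Second, I would regroup the sum by the output codeword. For each $c\in\mcC\setminus\mcC[S,t]$ (with $\mcO_S$ zero on $\{c_i\}\cap S$), observe that the leaves with $c_v=c$ and $\alpha_v$ zero on $\{c_i\}\setminus S$ are exactly the leaves witnessing the joint event $\{T(\bmcO)=c\}\cap\{c\in\mcC_{\bmcO}\}$. Hence
\[
\sum_{v:c_v=c,\,\text{good}} 2^{-|Q_v|} \;=\; \Prx_{\bmcO}[T(\bmcO)=c,\ c\in\mcC_{\bmcO}\mid\bmcO_S]\;\le\;\Prx[c\in\mcC_{\bmcO}\mid\bmcO_S]\;=\;2^{-m_c},
\]
where $m_c \coloneqq |\{c_i\}\setminus S|\ge n-t+1$ since $c\notin\mcC[S,t]$. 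Thus the total is bounded by $\sum_c 2^{-m_c}$ over the relevant codewords, with each term at most $2^{-(n-t+1)}$.

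The main obstacle is that $|\mcC\setminus\mcC[S,t]|$ can be enormous, so a naive union over all such $c$ is too weak---the $L$ from list recoverability must be brought in. The key leverage is validity combined with the hypothesis $|S|+q\le\ell$: for any realization of $\bmcO$, the output $c=T(\bmcO)$ lies in $\mcC[S\cup Q_T(\bmcO),n]\subseteq\mcC[S\cup Q_T(\bmcO),t]$, a set of size at most $L$ by \Cref{def:list recoverability}. Concretely, I would rewrite
\[
\Prx_{\bmcO}[\text{event}\mid\bmcO_S]\;\le\;\mathbb{E}_{\bmcO'}\Bigl[\bigl|\{c\in\mcC[S\cup Q_T(\bmcO),n]\setminus\mcC[S,t]:c\in\mcC_{\bmcO}\}\bigr|\Bigr],
\]
swap the expectation and sum over $c$, and bound each resulting term by $2^{-m_c}$. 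The combinatorial heart is the identity that for \emph{any} fixed set $P$ with $|S\cup P|\le\ell$, $\sum_{c\in\mcC[S\cup P,n]\setminus\mcC[S,t]}2^{-m_c}\le L\cdot 2^{-(n-t+1)}$, because at most $L$ codewords appear and each contributes at most $2^{-(n-t+1)}$. The hard part will be transferring this per-path bound across the random path $Q_T(\bmcO)$ without picking up an extra $2^q$ factor; I expect this is accomplished by carefully pairing each good leaf $v$ with the particular set $S\cup Q_v$ to which list recoverability applies, so that the leaf sum telescopes into the desired $L\cdot 2^{-(n-t+1)}\le L\cdot 2^{-(n-t)}$ bound.
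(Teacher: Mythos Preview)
Your leaf decomposition and regrouping by codeword are sound, and you have correctly isolated the crux: the unrestricted sum $\sum_c 2^{-m_c}$ is far too large, and list recoverability must somehow restrict it to $L$ terms. But the per-path identity you state cannot be transferred across the random path, and pairing each good leaf with $S\cup Q_v$ does not telescope. Here is why both attempts fail. If you first bound $\Pr[T(\bmcO)=c,\,c\in\mcC_{\bmcO}\mid\bmcO_S]\le 2^{-m_c}$, you have already thrown away the link between $c$ and $Q_T(\bmcO)$, so there is no way to reinstate the restriction $c\in\mcC[S\cup Q_T,n]$ afterwards. If instead you keep the path and evaluate your expectation inequality leaf by leaf, the inner count $|\{c\in\mcC[S\cup Q_v,n]\setminus\mcC[S,t]:c\in\mcC_{\bmcO}\}|$ is \emph{deterministic} at each leaf (every coordinate of every such $c$ has been revealed along the path), so the only pointwise bound available is $L$, yielding merely $\Pr[\text{event}]\le L$ with no $2^{-(n-t)}$ factor. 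And the per-path sum $\sum_{c\in\mcC[S\cup P,n]\setminus\mcC[S,t]}2^{-m_c}$ is simply not an upper bound on the event indicator: when the event occurs the indicator equals $1$, while each summand is at most $2^{-(n-t+1)}<1$.

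The missing idea---which is exactly what the paper does---is to replace ``$c$ is fully queried'' by the weaker event $Q(c,\bmcO)$ that $c$ is $t$-queried by $S\cup\Path(T,\bmcO)$. Validity still gives $\{T(\bmcO)=c\}\subseteq Q(c,\bmcO)$, so
\[
\Pr[T(\bmcO)=c,\,c\in\mcC_{\bmcO}\mid\bmcO_S]\;\le\;\Pr[Q(c,\bmcO)\mid\bmcO_S]\cdot\Pr[c\in\mcC_{\bmcO}\mid Q(c,\bmcO),\,\bmcO_S].
\]
Stopping at $t$ rather than $n$ is precisely what preserves randomness: at the first node along the path where $c$ becomes $t$-queried, $n-t$ coordinates of $c$ are still unqueried and hence unconditioned, which gives the second factor $\le 2^{-(n-t)}$. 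List recoverability now enters on the \emph{first} factor: $\sum_c\Pr[Q(c,\bmcO)\mid\bmcO_S]=\Ex_{\bmcO}[\,\#\{c:Q(c,\bmcO)\}\mid\bmcO_S\,]\le L$, since $|S\cup\Path(T,\bmcO)|\le\ell$ pointwise. In your leaf picture, the correct pairing is not with $S\cup Q_v$ but with the \emph{ancestor} of $v$ at which $c_v$ first becomes $t$-queried; that is what separates the $2^{-(n-t)}$ from the $L$.
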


The $S = \emptyset$ special case of~\Cref{lem:main-lemma} recovers~\cite{YZ22}'s classical query lower bound for {\sc NullCodeword} (Lemma 6.9 of~\cite{YZ22}).\footnote{\cite{YZ22} proved their lower bound with $\ell, L,$ and $t$ instantiated as specific functions of $n$, but an inspection of their proof shows that it establishes the more general statement corresponding to~\Cref{lem:main-lemma} with $S = \emptyset$.} \Cref{lem:main-lemma} says that {\sc NullCodeword} remains hard when conditioned on any small set $S$ of values of the oracle---we think of these values as being revealed to the tree for free---and the tree is asked to find a new codeword in $\mcC_{\mcO}$ that is sufficiently disjoint from $S$ in the sense of lying outside $\mcC[S,t]$.

\paragraph{A key event used in the proof of \Cref{lem:main-lemma}.} For any codeword $c \in \mcC \setminus \mcC[S, t]$,
 we define the event: 
\[ Q(c,\mcO) \coloneqq \Ind[\text{$c$ is $t$-queried by $S\cup \Path(T,\mcO)$}]. \]
We drop the dependence on $S$, $T$, and $t$ since they are fixed in the statement of~\Cref{lem:main-lemma}. 
Before proving the main lemma, we prove a few supporting claims:

\begin{claim}
\label{clm:event-small-list}
Regardless of $\mcO$, the number of $c$ for which $Q(c,\mcO)$ holds is at most $L$.
\end{claim}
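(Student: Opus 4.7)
\medskip

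\noindent\textbf{Proof plan for \Cref{clm:event-small-list}.}
The plan is to reduce the claim directly to the $(\ell, L, t)$-list recoverability property of $\mcC$ applied to the set $S \cup \Path(T, \bmcO)$. First I would observe that since $T$ has depth $q$, for every oracle $\mcO$ we have $|\Path(T, \mcO)| \le q$, and therefore
\[
|S \cup \Path(T, \mcO)| \;\le\; |S| + |\Path(T, \mcO)| \;\le\; (\ell - q) + q \;=\; \ell.
\]

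Next, unpacking the definition of $Q(c, \mcO)$, the set of codewords $c$ for which $Q(c, \mcO)$ holds is by definition contained in $\mcC[S \cup \Path(T, \mcO), t]$, the set of codewords that are $t$-queried by $S \cup \Path(T, \mcO)$ (\Cref{def:t-queried}). Since $S \cup \Path(T, \mcO)$ has size at most $\ell$, the $(\ell, L, t)$-list recoverability of $\mcC$ (\Cref{def:list recoverability}) gives
\[
\bigl|\mcC[S \cup \Path(T, \mcO), t]\bigr| \;\le\; L,
\]
and the claim follows immediately.

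I do not anticipate a real obstacle here: the only thing to be careful about is that the size budget for list recoverability is not exceeded, which is precisely why the hypothesis $|S| \le \ell - q$ is imposed in \Cref{lem:main-lemma}. In other words, the entire content of the claim is that the bookkeeping between $|S|$, the depth $q$ of $T$, and the list-recoverability parameter $\ell$ works out; everything else is an application of the definitions.
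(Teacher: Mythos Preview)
Your proposal is correct and follows essentially the same argument as the paper: bound $|S \cup \Path(T,\mcO)| \le (\ell - q) + q = \ell$ using the depth of $T$ and the hypothesis $|S| \le \ell - q$, then invoke $(\ell,L,t)$-list recoverability to conclude $|\mcC[S \cup \Path(T,\mcO), t]| \le L$. There is nothing to add.
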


\begin{proof}
Observe that $|S \cup \Path(T, \mcO)| \le \ell$.
This is because $T$ is a $q$-query algorithm; thus $|\Path(T, \mcO)| \le q$, and by the assumption of \Cref{lem:main-lemma}, $|S| \le \ell - q$.
Then by $(\ell, L, t)$-list-recoverability of $\mathcal{C}$, 
it follows that the number of codewords that are
$t$-queried by $S \cup \Path(T,\mcO)$ is at most $L$.
\end{proof}

\begin{proposition}
    \label{prop:disjoint-conditioning}
    Let $\bE_1,\ldots, \bE_m$ be any \emph{disjoint} events, and $\bF$ be any other event. For $\bE \coloneqq \bE_1 \cup \cdots \cup \bE_m$,
    \begin{equation*}
        \Pr[\bF \mid \bE] \leq \max_{i \in [m]} \Pr[\bF \mid \bE_i].
    \end{equation*}
\end{proposition}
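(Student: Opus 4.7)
The plan is to reduce the inequality to the elementary fact that a weighted average is at most the maximum. First I would expand the left-hand side via the definition of conditional probability as $\Pr[\bF \mid \bE] = \Pr[\bF \cap \bE]/\Pr[\bE]$, handling the edge case where $\Pr[\bE] = 0$ separately (the claim is vacuous since conditioning is undefined, or conventionally the right-hand side bound holds trivially).

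Next I would use the disjointness of the $\bE_i$'s twice: once to write $\Pr[\bE] = \sum_i \Pr[\bE_i]$, and once to write $\Pr[\bF \cap \bE] = \Pr[\bigcup_i (\bF \cap \bE_i)] = \sum_i \Pr[\bF \cap \bE_i]$, where the second equality uses that the events $\bF \cap \bE_i$ remain disjoint. Then I substitute $\Pr[\bF \cap \bE_i] = \Pr[\bF \mid \bE_i] \Pr[\bE_i]$ (restricting the sum to those $i$ with $\Pr[\bE_i] > 0$, since any $i$ with $\Pr[\bE_i] = 0$ contributes nothing to either sum and its conditional probability is irrelevant).

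At this point the expression reads
\[
\Pr[\bF \mid \bE] \;=\; \frac{\sum_{i} \Pr[\bF \mid \bE_i]\,\Pr[\bE_i]}{\sum_{i} \Pr[\bE_i]},
\]
which is a convex combination of the values $\Pr[\bF \mid \bE_i]$ with weights $\Pr[\bE_i]/\Pr[\bE]$. A weighted average is bounded above by its largest entry, yielding the desired $\Pr[\bF \mid \bE] \le \max_i \Pr[\bF \mid \bE_i]$.

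There is no real obstacle here; this is a routine one-line probability identity. The only mild care needed is the bookkeeping around indices $i$ with $\Pr[\bE_i] = 0$, which can be addressed by restricting the index set to those $i$ with positive probability, or by adopting the convention that such terms are simply omitted from both sums.
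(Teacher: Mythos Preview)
Your proposal is correct and is essentially the same argument as the paper's: both reduce to the observation that $\Pr[\bF\mid\bE]$ is a convex combination of the $\Pr[\bF\mid\bE_i]$'s and hence bounded by their maximum. The only cosmetic difference is packaging---the paper introduces an auxiliary index random variable $\bz$ and invokes the tower property, whereas you expand the conditional probability directly---but the content is identical.
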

\begin{proof}
    Let $\bz$ be a random variable taking on $\varnothing$ if $\bE$ does not occur, and otherwise taking on the unique choice of $i$ for which $\bE_{i}$ occurs. Then,
    \begin{align*}
        \Pr[\bF \mid \bE] &= \Pr[\bF \mid \bz \in [m]] \\
        &= \Ex_{\bz}\bracket*{\Pr[\bF \mid \bz] \mid \bz \in [m]} \\
        &\leq \max_{z \in [m]} \Pr[\bF \mid \bz = z] \leq \max_{i \in [m]} \Pr[\bF \mid \bE_i].\qedhere
    \end{align*}
\end{proof}
\begin{claim}
\label{clm:many-free-bits}
    For any $c \in \mcC \setminus \mcC[S, t]$, 

\[\Prx_{\bmcO}[c \in \mcC_{\bmcO} \mid Q(c, \bmcO)\text{ and }\bmcO_S \equiv \mcO_S] \le 2^{-(n-t)}.\]
\end{claim}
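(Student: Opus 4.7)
The plan is to apply \Cref{prop:disjoint-conditioning}, but crucially \emph{not} by partitioning $Q(c,\bmcO)$ over full root-to-leaf paths of $T$. That naive decomposition is too lossy: conditional on $T$ reaching a specific leaf whose path queries all of $c$'s coordinates and answers each with $0$, the conditional probability that $c \in \mcC_{\bmcO}$ is $1$, so the maximum in \Cref{prop:disjoint-conditioning} yields no nontrivial bound. Instead, the plan is to partition $Q(c,\bmcO)$ by \emph{transcripts of $T$ truncated at the first step at which $T$ has queried enough of $c$'s coordinates outside $S$ to witness $Q$}. This first-witness stopping time is the main conceptual step of the proof: it is tuned so that the conditioned event fixes neither too few nor too many of $c$'s coordinates.

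Concretely, let $S_c \coloneqq \{i : c_i \in S\}$ and $a \coloneqq |S_c|$; the hypothesis $c \notin \mcC[S, t]$ gives $a \le t - 1$, so $t' \coloneqq t - a \ge 1$. The event $Q(c, \bmcO)$ holds exactly when $T$ queries at least $t'$ coordinates of $c$ lying outside $S$. For each $\bmcO$ with $Q$ holding, let $\tau(\bmcO) \le q$ be the first step at which $T$'s execution on $\bmcO$ has queried $t'$ such coordinates, and let $\pi(\bmcO)$ be the sequence of query-answer pairs from $T$'s first $\tau(\bmcO)$ steps. For each prefix $\pi$ arising in this way, let $\bE_\pi$ be the event that $T$'s execution on $\bmcO$ has $\pi$ as its length-$|\pi|$ prefix. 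By determinism of $T$, the $\bE_\pi$'s are disjoint, and their union (intersected with $\bmcO_S \equiv \mcO_S$) equals $Q(c,\bmcO) \cap (\bmcO_S \equiv \mcO_S)$.

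The key computation is then direct. Under the conditioning on $\bE_\pi$ and $\bmcO_S \equiv \mcO_S$, the oracle $\bmcO$ is fully determined on $S$ and on the symbols queried in $\pi$, and is uniformly random on all remaining symbols. Of $c$'s $n$ coordinates, exactly $a$ lie in $S$ and, by the definition of $\tau$, exactly $t'$ lie among the queries of $\pi$, leaving precisely $n - a - t' = n - t$ coordinates of $c$ whose oracle values are still uniformly random and independent. If $\mcO_S$ assigns $1$ to any $c_i \in S$, or if $\pi$ answers $1$ at any queried $c_i \notin S$, then $\Pr[c \in \mcC_{\bmcO} \mid \bE_\pi, \bmcO_S \equiv \mcO_S] = 0$; otherwise this conditional probability equals $2^{-(n-t)}$ exactly. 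In either case it is at most $2^{-(n-t)}$, and \Cref{prop:disjoint-conditioning} delivers the stated bound.
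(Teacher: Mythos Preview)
Your proof is correct and takes essentially the same approach as the paper. Both arguments partition $Q(c,\bmcO)$ via a first-hitting stopping time---the first moment at which $t$ coordinates of $c$ lie in $S \cup \Path(T,\bmcO)$---and then apply \Cref{prop:disjoint-conditioning}; you phrase the partition in terms of length-$\tau$ transcripts $\pi$ while the paper phrases it in terms of tree nodes $v \in V$, but these are equivalent descriptions of the same decomposition. One small wording point: when you write ``exactly $t'$ lie among the queries of $\pi$,'' you mean exactly $t'$ coordinates of $c$ \emph{outside $S$} lie among those queries (queries in $\pi$ that happen to hit coordinates already in $S$ are absorbed into the count $a$), but your arithmetic $n - a - t' = n - t$ is correct because the fixed set is $S \cup \{\text{queries of }\pi\}$ and $|A \cup B| = a + |B \setminus A| = a + t'$.
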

In this proof we will use the notation $\Path(T, v)$ to denote the set of symbols queried in $T$ in the path terminating at node $v$.
\begin{proof}
By the assumption that $c \not\in \mcC[S,t]$,
it must be the case that $c$ is at most $(t-1)$-queried by $S$.
Consider the set of nodes of $T$ that make the $t_{th}$ query to $c$ (by the assumption that $Q(c, \bmcO)$ occurs and $S$ at most $(t-1)$-queries $c$, this set must not be empty).
In other words,
this set contains the nodes $v$ such that $(\Path(T, v) \setminus \{v\}) \cup S$
contains exactly $t$ symbols of $c$.
The event $Q(c, \bmcO)$ is exactly the event that $\bmcO$ reaches such a node. 
We will denote this set of nodes $V$.

We condition on $Q(c, \bmcO)$ and $\bmcO_S \equiv \mcO_S$.
Since the events $\{\Ind[\bmcO$ reaches $v]\mid v \in V\}$ are disjoint, we have by \Cref{prop:disjoint-conditioning}:
\[\Prx_{\bmcO}[c \in \mcC_{\bmcO} \mid Q(c, \bmcO)\text{ and }\bmcO_S \equiv \mcO_S]  \le \max_{v \in V}\big\{ \Prx[c \in \mcC_{\bmcO}\mid \bmcO\text{ reaches $v$ and }\bmcO_S \equiv \mcO_S] \big\}.\]
Since $(\Path(T, v) \setminus \{v\})\cup S$ contains $t$ symbols of $c$,
conditioning on this event leaves $n-t$ symbols of $c$ unconditioned.
These $n-t$ symbols are independent of all other symbols of $\bmcO$ (because $\bmcO$ is a fully random oracle). For the event $c \in \mcC_{\bmcO}$ to occur,
all of the $n-t$ unconditioned symbols must evaluate to $0$, which occurs with probability $2^{-(n-t)}$.
\end{proof}

\subsubsection{Putting the claims together to prove~\Cref{lem:main-lemma}}
Now we will prove the main lemma of this section, using the claims we have shown.
\begin{proof}[Proof of \Cref{lem:main-lemma}]

We aim to upper bound 
\[\Prx_{\bmcO}\big[T(\bmcO)\in \mcC_{\bmcO}\setminus\mathcal{C}[S,t]\mid \bmcO_S \equiv \mcO_S\big],\]
which is the probability that $T$ produces a new correct output (i.e.~one in $\mcC_{\mcO}$) that hasn't yet been $t$-queried by the set $S$.

We have:
\[\Prx_{\bmcO}\big[T(\bmcO)\in \mcC_{\bmcO}\setminus\mathcal{C}[S,t]\mid \bmcO_S \equiv \mcO_S\big] = \sum_{c \in  \mcC \setminus \mcC[S, t]} \Prx_{\bmcO}\bracket[\big]{T(\bmcO) = c \text{ and }c \in \mcC_{\bmcO} \mid \bmcO_S \equiv \mcO_S}.\]
Recall that we require every tree to $n$-query its output (see \Cref{par:trees-forests} for discussion on this.)
Therefore, if $T$ outputs $c$, then it must be the case that $Q(c, \bmcO)$ occurs.
So we have:
\begin{align*}
 \sum_{c \in  \mcC \setminus \mcC[S, t]} &\Prx_{\bmcO}\bracket[\big]{T(\bmcO) = c \text{ and }c \in \mcC_{\bmcO} \mid \bmcO_S \equiv \mcO_S} \\
    &= \sum_{c \in  \mcC \setminus \mcC[S, t]}  \Prx_{\bmcO}[Q(c, \bmcO) \mid \bmcO_S \equiv \mcO_S]\cdot \Prx_{\bmcO}\bracket[\big]{T(\bmcO) = c \text{ and }c \in \mcC_{\bmcO} \mid Q(c, \bmcO) \text{ and } \bmcO_S \equiv \mcO_S} \\
    &\leq \sum_{c \in  \mcC \setminus \mcC[S, t]} \Prx_{\bmcO}[Q(c, \bmcO) \mid \bmcO_S \equiv \mcO_S] \cdot \Prx_{\bmcO}\bracket[\big]{c \in \mcC_{\bmcO} \mid Q(c, \bmcO) \text{ and }\bmcO_S \equiv \mcO_S}.
\end{align*}
Now we apply the bounds of \Cref{clm:many-free-bits} and \Cref{clm:event-small-list}, which concludes the proof:
\begin{align*}
    &\le \sum_{c \in  \mcC \setminus \mcC[S, t]} 2^{-(n-t)} \cdot \Prx_{\bmcO}[Q(c, \bmcO) \mid \bmcO_S \equiv \mcO_S]\tag{\Cref{clm:many-free-bits}}\\ 
    &= 2^{-(n-t)} \cdot \Ex_{\bmcO}\bracket[\bigg]{ \sum_{c \in  \mcC \setminus \mcC[S, t]} Q(c, \bmcO) \,\bigg|\, \bmcO_S \equiv \mcO_S} \tag{Linearity of expectation}\\
    &\leq 2^{-(n-t)} \cdot \max_{\mcO \text{ consistent with }\mcO_S}\set[\Bigg]{ \sum_{c \in  \mcC \setminus \mcC[S, t]} Q(c, \mcO)} \tag{Expectation at most max}\\
    &\le L \cdot 2^{-(n-t)}\tag{\Cref{clm:event-small-list}}.
\end{align*}
\end{proof}

\subsection{Proof of~\Cref{lem:forest fails} using~\Cref{lem:main-lemma} }


We now introduce a crucial definition, \emph{well-spread} sequences of trees. This definition requires that the output of one tree is (mostly) disjoint of all variables queried by previous trees.
\begin{definition}[Well-spread sequence of trees]
    \label{def:well-spread}
    We say a sequence of trees $T^{(1)}, \ldots, T^{(r)}$ is \emph{$t$-well-spread} on an oracle $\mcO$, if, using $S_i \coloneqq \bigcup_{j=1}^i \Path(T^{(i)}, \mcO)$ to denote the variables queried by $T^{(1)}, \ldots, T^{(i)}$ and $c_{i+1} \coloneqq T^{(i+1)}(\mcO)$ the output of the $(i+1)^{\text{st}}$ tree,
    \begin{equation*}
        c_{i+1} \text{ is not }t\text{-queried by }S_i.
    \end{equation*}
\end{definition}



To see why \Cref{def:well-spread} is useful, we can apply \Cref{lem:main-lemma} to show that it is very unlikely for a sequence of trees to be unpredictable and still output elements of $\mcC_{\bmcO}$.
\begin{claim}[Well-spread sequences rarely succeed]
\label{claim:unlikely well spread and all zeroes}
    Let $\mcC$ be an $(\ell, L, t)$-list recoverable code and $T^{(1)}, \ldots, T^{(r)}$ be a sequence of valid (as in \Cref{def:valid-tree}) depth-$q$ trees where $qr \leq \ell$. The probability over a uniform $\bmcO$ that $T^{(1)}, \ldots, T^{(r)}$ are $t$-well-spread and $r$-succeed on $\bmcO$ is at most $(L \cdot 2^{-(n-t)})^r$.
\end{claim}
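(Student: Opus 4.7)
The plan is induction on $r$. Let $W_r$ denote the event that $T^{(1)}, \ldots, T^{(r)}$ are $t$-well-spread and $r$-succeed on $\bmcO$. For the base case $r = 1$, the well-spread condition is vacuous (since $S_0 = \emptyset$) and $1$-succeed reduces to $T^{(1)}(\bmcO) \in \mcC_{\bmcO}$; applying \Cref{lem:main-lemma} with $S = \emptyset$ gives $\Pr[W_1] \leq L \cdot 2^{-(n-t)}$.

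For the inductive step, I would first observe that $W_r$ decomposes as $W_{r-1}$ together with the single extra event $\{T^{(r)}(\bmcO) \in \mcC_{\bmcO} \setminus \mcC[S_{r-1}, t]\}$. Extending the well-spread condition for one more step is exactly the requirement $c_r \notin \mcC[S_{r-1}, t]$, and distinctness of $c_r$ from $c_1, \ldots, c_{r-1}$ comes for free: by validity of each $T^{(j)}$, every $c_j$ is $n$-queried by $S_{r-1}$ and therefore lies in $\mcC[S_{r-1}, t]$, so the constraint $c_r \notin \mcC[S_{r-1}, t]$ forces $c_r \neq c_j$ for all $j < r$.

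Next I would condition on $(S_{r-1}, \bmcO|_{S_{r-1}})$, which captures the entire execution transcript of $T^{(1)}, \ldots, T^{(r-1)}$ and in particular determines whether $W_{r-1}$ holds. Conditioned on this transcript, the remainder of $\bmcO$ is uniform on the complement of $S_{r-1}$, and by the hypothesis $qr \leq \ell$ we have $|S_{r-1}| \leq q(r-1) \leq \ell - q$. This is precisely the setup of \Cref{lem:main-lemma} applied to $T^{(r)}$ with $S = S_{r-1}$, yielding the pointwise conditional bound $L \cdot 2^{-(n-t)}$ on $\Pr[T^{(r)}(\bmcO) \in \mcC_{\bmcO} \setminus \mcC[S_{r-1}, t] \mid \bmcO|_{S_{r-1}}]$. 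Multiplying by $\Ind[W_{r-1}]$ and taking expectations gives $\Pr[W_r] \leq L \cdot 2^{-(n-t)} \cdot \Pr[W_{r-1}]$, and the induction closes.

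The main point to get right is the measurability/conditioning step: namely, that $W_{r-1}$ is indeed determined by $\bmcO|_{S_{r-1}}$ (true because each tree is a deterministic function of the oracle responses to its queries, and $S_{r-1}$ is exactly the set of symbols ever queried by $T^{(1)}, \ldots, T^{(r-1)}$) and that the residual oracle is still uniform on the complement of $S_{r-1}$ so that \Cref{lem:main-lemma} can be invoked pointwise. Once these are in place, the proof is a straightforward chain-rule decomposition.
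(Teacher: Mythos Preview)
Your proposal is correct and follows essentially the same approach as the paper: both condition on the transcript $(S_{r-1},\bmcO|_{S_{r-1}})$, observe that the event for the first $r-1$ trees is measurable with respect to this transcript (using validity), and then apply \Cref{lem:main-lemma} to bound the conditional probability of the $r$-th step by $L\cdot 2^{-(n-t)}$. Your argument that distinctness of $c_r$ from $c_1,\ldots,c_{r-1}$ is automatic (since each $c_j$ is fully queried by $S_{r-1}$ and hence lies in $\mcC[S_{r-1},t]$) is a nice explicit touch that the paper leaves implicit.
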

We will also show that we can always find a well-spread sequence.
\begin{claim}[Every successful forest has a well-spread sequence]
    \label{clm:well-spread-exists}
    Let $\mcC$ be a $(\ell, L, t)$-list-recoverable code. For any depth-$q$ forest $\mcF \coloneqq T^{(1)}, \ldots, T^{(m)}$ which $L+1$-succeeds on $\mcO$ and $r$ satisfying $(r-1)q \leq \ell$, there exists $i_1,\ldots, i_r \in [m]$ for which $T^{(i_1)},\ldots, T^{(i_r)}$ is $t$-well-spread and $r$-succeeds on $\mcO$.
\end{claim}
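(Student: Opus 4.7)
The plan is to build the subsequence $T^{(i_1)}, \ldots, T^{(i_r)}$ greedily, one tree at a time, using list recoverability to show that a valid choice always exists and validity to get distinctness for free.

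Concretely, having already chosen $T^{(i_1)}, \ldots, T^{(i_j)}$ with $j < r$, write $c_k \coloneqq T^{(i_k)}(\mcO)$ and
\[
S_j \coloneqq \bigcup_{k=1}^{j} \Path(T^{(i_k)}, \mcO).
\]
Since each tree has depth at most $q$, $|S_j| \leq jq \leq (r-1)q \leq \ell$, so $(\ell, L, t)$-list recoverability gives $|\mcC[S_j, t]| \leq L$. I will pick $T^{(i_{j+1})}$ to be any tree in $\mcF$ whose output lies in $\mcC_{\mcO} \setminus \mcC[S_j, t]$; such a tree exists because $\mcF$ $(L+1)$-succeeds on $\mcO$, so the outputs of $\mcF$ contain at least $L+1$ distinct elements of $\mcC_{\mcO}$, of which at most $L$ can fall inside $\mcC[S_j, t]$.

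Distinctness of $c_1, \ldots, c_r$ comes for free from the validity assumption (\Cref{def:valid-tree}): each earlier output $c_k$ is fully queried along $\Path(T^{(i_k)}, \mcO) \subseteq S_j$, hence $c_k$ is $n$-queried by $S_j$, so $c_k \in \mcC[S_j, t]$ (using $t \leq n$). Since $c_{j+1} \notin \mcC[S_j, t]$, we get $c_{j+1} \neq c_k$ for every $k \leq j$, and in particular $i_{j+1} \notin \{i_1, \ldots, i_j\}$. The construction directly enforces that $c_{j+1}$ is not $t$-queried by $S_j$, which is exactly \Cref{def:well-spread}, so the resulting sequence is $t$-well-spread and $r$-succeeds on $\mcO$.

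There is essentially only one thing that could go wrong: the greedy step might fail because every $\mcC_{\mcO}$-output of $\mcF$ is already $t$-queried by $S_j$. This is precisely where the $(L+1)$-success hypothesis (rather than mere $1$-success) and the $(r-1)q \leq \ell$ budget on $|S_j|$ are both needed — the former ensures that the pool of candidates exceeds the list-recoverability bound $L$, and the latter ensures list recoverability applies at every step. Once those quantitative checks go through, the rest of the argument is bookkeeping.
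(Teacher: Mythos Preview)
Your proof is correct and follows essentially the same greedy approach as the paper's. The only cosmetic difference is that the paper first fixes a set $I \subseteq [m]$ of $L+1$ indices with pairwise distinct outputs in $\mcC_{\mcO}$ and restricts all choices to $I$ (so distinctness comes from membership in $I$), whereas you pick from all of $\mcF$ and derive distinctness from validity; both arguments hinge on the same list-recoverability count $|\mcC[S_j,t]| \leq L$.
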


Before proving \Cref{claim:unlikely well spread and all zeroes,clm:well-spread-exists}, we show that together they easily imply \Cref{lem:forest fails}.
\begin{proof}[Proof of \Cref{lem:forest fails} assuming \Cref{claim:unlikely well spread and all zeroes,clm:well-spread-exists}]
    Let $r \coloneqq \floor{\ell/q}$ and $\mcF \coloneqq \set{T^{(1)}, \ldots, T^{(m)}}$. Then, by \Cref{claim:unlikely well spread and all zeroes}, for any fixed choice of $i_1,\ldots, i_r \in [m]$, the probability that $T^{(i_1)}, \ldots T^{(i_r)}$ are $t$-well-spread and $r$-succeed on $\bmcO$ is at most $(L \cdot 2^{-(n-t)})^r$. There are at most $m^r$ choices for $i_1,\ldots, i_r$. Therefore, by union bound, the probability there exists any $i_1,\ldots, i_r \in [m]$ for which $T^{(i_1)}, \ldots T^{(i_r)}$ are $t$-well-spread and $r$-succeed on $\bmcO$ is at most $m^r \cdot (L \cdot 2^{-(n-t)})^r$.

    By \Cref{clm:well-spread-exists}, in order $\mcF$ to $(L+1)$-succeed on $\bmcO$, there must be such a choice of $i_1,\ldots, i_r \in [m]$. Therefore,
    \begin{equation*}
        \Prx_{\bmcO}[\mathcal{F}\text{ }(L+1)\text{-succeeds on }\bmcO ]\le (mL\cdot2^{-(n-t)})^r = (mL\cdot2^{-(n-t)})^{\floor{\ell/q}}. \qedhere
    \end{equation*}
\end{proof}

\subsubsection{Proof of \Cref{claim:unlikely well spread and all zeroes}}
\begin{proof}
    Let $E_{\bmcO}(T^{(1)}, \ldots, T^{(r)})$ be the event indicating that $T^{(1)}, \ldots, T^{(r)}$ are $t$-well-spread and $r$-succeed on $\bmcO$. We will show that for any $i < r$,
    \begin{equation}
        \label{eq:succeed-conditional}
        \Pr[E_{\bmcO}(T^{(1)}, \ldots, T^{(i+1)}) \mid E_{\bmcO}(T^{(1)}, \ldots, T^{(i)})] \leq L \cdot 2^{-(n-t)},
    \end{equation}
    which easily implies the desired result.

    Let $\bS_i \coloneqq \bigcup_{j=1}^i \Path(T^{(i)}, \bmcO)$ be the variables of $\bmcO$ queried by $T^{(1)}, \ldots, T^{(i)}$. The outputs of $T^{(1)}, \ldots, T^{(i)}$ are entirely determined by $\bmcO_{\bS_i}$. Furthermore, since the trees are valid (meaning each tree entirely queries its output), whether $T^{(1)}, \ldots, T^{(i)}$ $i$-succeed is also entirely determined by $\bmcO_{\bS_i}$. Therefore, the event $E_{\bmcO}(T^{(1)}, \ldots, T^{(i)})$ is entirely determined by $\bmcO_{\bS_i}$. Applying \Cref{prop:disjoint-conditioning}, we have that
     \begin{equation*}
         \Pr[E_{\bmcO}(T^{(1)}, \ldots, T^{(i+1)}) \mid E_{\bmcO}(T^{(1)}, \ldots, T^{(i)})] \leq \max_{S_i, \mcO_{S_i}} \Pr[E_{\bmcO}(T^{(1)}, \ldots, T^{(i+1)}) \mid \bmcO_{\bS_i} = \mcO_{S_i}],
     \end{equation*}
     where the maximum is taken over all choices of $S_i$ and $\mcO_{S_i}$ that result in $ E_{\bmcO}(T^{(1)}, \ldots, T^{(i)})$ occurring. Next, we observe that in order for $E_{\bmcO}(T^{(1)}, \ldots, T^{(i+1)})$, it must be the case that
    \begin{equation*}
        T^{(i+1)} \in \mcC_{\bmcO} \setminus \mcC[\bS_i, t].
    \end{equation*}
    Therefore, we have that
    \begin{equation*}
        \Pr[E_{\bmcO}(T^{(1)}, \ldots, T^{(i+1)}) \mid E_{\bmcO}(T^{(1)}, \ldots, T^{(i)})] \leq \max_{S_i, \mcO_{S_i}} \Pr[T^{(i+1)}(\bmcO) \in \mcC_{\bmcO} \setminus \mcC[\bS_i, t] \mid \bmcO_{\bS_i} = \mcO_{S_i}].
    \end{equation*}
    Applying \Cref{lem:main-lemma} and the fact that we only take the maximum over $|S_i|$ of size at most $q \cdot (r-1)$ recovers \Cref{eq:succeed-conditional}.
\end{proof}



\subsubsection{Proof of \Cref{clm:well-spread-exists}}

Since $\mcF$ $(L+1)$-succeeds on $\mcO$, there exists some set of $(L+1)$ coordinates $I \subseteq [m]$ for which all of $\set{T^{(i)}}_{i \in I}$ output distinct elements of $\mcC_{\mcO}$. We will use a simple greedy algorithm to choose $i_1, \ldots, i_r$ from this set $I$.

For each $j \in [r]$, we set $i_j$ to be any element of $I$ satisfying,
\begin{equation}
    \label{eq:choose-well-spread}
    T^{(i_j)}(\mcO)\text{ is not $t$-queried by } \bigcup_{a = 1}^{j-1}\Path(T^{(i_a)}, \mcO).
\end{equation}
By definition, the resulting $T^{(i_1)},\ldots, T^{(i_r)}$ are $t$-well-spread. Furthermore, they will all output unique elements of $\mcC_O$, since every index we pick is in the set $I$, so they $r$-succeed. All that remains is to justify that this greedy procedure never gets stuck; i.e., there is always a choice for $i_j$ satisfying \Cref{eq:choose-well-spread}.

Each of the $L+1$ trees $\set{T^{(i)}}_{i \in I}$ output a unique element of $\mcC$. The total number of variables contained in $S_{j-1} \coloneqq \bigcup_{a = 1}^{j-1}\Path(T^{(i_a)}, \mcO)$ is at most $(r-1) \cdot q$, which is at most $\ell$ (by assumption of \Cref{clm:well-spread-exists})  . Therefore, the number of elements of $\mcC$ that are $t$-queried by $S_{j-1}$ is at most $L$. Therefore, there must exist at least one choice of $i_j$ satisfying \Cref{eq:choose-well-spread}.

\section{Proof of~\Cref{thm:main intro} using~\Cref{lem:forest fails}}

In this section we prove: 


\begin{theorem}
    \label{thm:classical-lb-hitting}
      Let $\mcC$ be the code from \Cref{fact:code-exists}. With probability at least $1 - 2^{-\Omega(n)}$ over a random oracle $\bmcO:\Sigma \to \zo$, there is no size $s \coloneqq 2^{n^{O(1)}}$ oracle circuit $G$ such that $G^{\bmcO}$ samples a distribution that is $(1 - 2^{-\Omega(n)})$-close in TV distance to $\Unif(\mcC_{\bmcO})$.
\end{theorem}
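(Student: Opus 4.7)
The plan is to execute the two-step reduction sketched in Section~\ref{sec:extract}: from a hypothetical good oracle circuit $G$, extract a small collection of forests each of which must $(L+1)$-succeed on every oracle where $G$ samples well, and then combine Lemma~\ref{lem:forest fails} with a union bound over the forests and over circuits.

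Fix $\eps \coloneqq 2^{-\delta n}$ for a small constant $\delta$ and suppose for contradiction that, with probability at least $2^{-\Omega(n)}$ over $\bmcO$, some size-$s$ oracle circuit $G$ satisfies $d_{\mathrm{TV}}(G^{\bmcO},\Unif(\mcC_{\bmcO})) \le 1-\eps$. Since the total number of size-$s$ oracle circuits is at most $2^{O(s\log s)}$, by an outer union bound it suffices to show, for each fixed $G$, that the event $d_{\mathrm{TV}}(G^{\bmcO},\Unif(\mcC_{\bmcO})) \le 1-\eps$ has probability at most $2^{-\omega(s\log s)}$ over $\bmcO$.

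Fix $G$. By Fact~\ref{fact:connect-oracle-tree}, for every input $x$ there is a depth-$s$ tree $T_x$ with $T_x(\mcO)=G^{\mcO}(x)$, so I may regard $G^{\mcO}$ as a distribution over depth-$s$ trees. Let $\{h_\sigma\}$ be a pairwise-independent hash family whose seed $\sigma$ ranges over a set of size $u = 2^{O(s)}$, such that for each $\sigma$ the values $h_\sigma(1),\ldots,h_\sigma(m)$ are pairwise-independent uniform inputs to $G$, where $m \coloneqq \Theta(L/\eps)$. For each $\sigma$ form the size-$m$ forest $\mcF^{(\sigma)} \coloneqq \{T_{h_\sigma(1)},\ldots,T_{h_\sigma(m)}\}$. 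By Claim~\ref{claim:pairwise-hits-often} (the pairwise-independent version of Fact~\ref{fact:many distinct}), whenever $d_{\mathrm{TV}}(G^{\mcO},\Unif(\mcC_{\mcO})) \le 1-\eps$ the expected number (over a uniformly random seed) of distinct elements of $\mcF^{(\sigma)}(\mcO) \cap \mcC_{\mcO}$ is at least $L+1$, and by averaging there exists some $\sigma^\star(\mcO)$ for which $\mcF^{(\sigma^\star(\mcO))}$ $(L+1)$-succeeds on $\mcO$. Hence
\[
  \Prx_{\bmcO}\!\bigl[d_{\mathrm{TV}}(G^{\bmcO},\Unif(\mcC_{\bmcO})) \le 1-\eps\bigr]
  \;\le\; \sum_{\sigma} \Prx_{\bmcO}\bigl[\mcF^{(\sigma)}\ (L+1)\text{-succeeds on }\bmcO\bigr].
\]
Now Lemma~\ref{lem:forest fails} applied to each summand with $q=s$ and $n-t = \zeta n$ (from Fact~\ref{fact:code-exists}) gives the bound $(mL \cdot 2^{-\zeta n})^{\floor{\ell/s}}$. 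Since $m$ and $L$ are both $2^{O(n)}$, choosing $\delta$ small enough relative to $\zeta$ makes the base $mL \cdot 2^{-\zeta n}$ at most $2^{-\Omega(n)}$; and choosing the list-recoverability exponent $c$ in Fact~\ref{fact:code-exists} strictly larger than the exponent appearing in $s = 2^{n^{O(1)}}$ makes $\floor{\ell/s} \ge 2^{\Omega(n^c)}$. Together these give a per-seed bound of $2^{-\Omega(n)\cdot 2^{\Omega(n^c)}}$, which easily absorbs the $u = 2^{O(s)}$ seeds, the $2^{O(s\log s)}$ circuits, and still leaves a $2^{-\Omega(n)}$ slack for the conclusion.

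The main obstacle, and the step I would spend the most effort on, is proving Claim~\ref{claim:pairwise-hits-often}: showing that $m$ pairwise-independent draws from a distribution $(1-\eps)$-close to $\Unif(H)$ produce $\Omega(m\eps)$ distinct elements of $H$ in expectation. The natural route is a second-moment argument on the collision indicators $\Ind[\bx_i = \bx_j]$, using pairwise independence to factor these expectations while not requiring the sampler to be exactly uniform on $H$. The remaining steps---matching the constants $c<c'$, $\zeta$, and $\delta$ in Fact~\ref{fact:code-exists} to the target $s$ and $\eps$---are routine bookkeeping.
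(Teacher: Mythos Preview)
Your outline follows the paper's route closely---extract from $G$ a $2^{O(s)}$-sized collection of forests via a pairwise-independent generator, invoke Claim~\ref{claim:pairwise-hits-often} to get one forest that $(L+1)$-succeeds whenever $G^{\mcO}$ is $(1-\eps)$-close to $\Unif(\mcC_{\mcO})$, then apply Lemma~\ref{lem:forest fails} and union bound. The parameter bookkeeping is also essentially correct.

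There is one genuine gap. Claim~\ref{claim:pairwise-hits-often} has the hypothesis $|H|\ge m$, and you never verify that $|\mcC_{\bmcO}|\ge m$. With your choice $m=\Theta(L/\eps)=2^{\Theta(n)}$ this is a nontrivial fact about the random oracle: a priori $\mcC_{\bmcO}$ could be much smaller (even empty), and in that case $m$ pairwise-independent draws from a distribution close to $\Unif(\mcC_{\bmcO})$ need not yield $L+1$ distinct elements. The paper handles this with a separate second-moment computation (Claim~\ref{claim:set-is-big}) showing $|\mcC_{\bmcO}|\ge 2^{\Omega(n)}$ with probability $1-2^{-\Omega(n)}$. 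That calculation is where the Hamming-weight property (item~\ref{fact:code-exists--item:hw} of Fact~\ref{fact:code-exists}) is used---to control the correlations $\Ex[\Ind[\bmcO(c)=0^n]\Ind[\bmcO(c')=0^n]]$ between overlapping codewords---and it is not subsumed by anything else in your argument. Without it, the reduction to Lemma~\ref{lem:forest fails} does not go through, and it also accounts for the $2^{-\Omega(n)}$ failure probability in the theorem statement (everything else you wrote gives a much smaller $2^{-2^{\Omega(n^c)}}$ failure).

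A minor remark: your proposed route to Claim~\ref{claim:pairwise-hits-often} via the collision indicators $\Ind[\bx_i=\bx_j]$ is not quite the right object---collision counts do not directly lower-bound the number of distinct elements of $H$ hit. The paper instead applies the second-moment method to $\bZ_x=\sum_i\Ind[\bx_i=x]$ for each fixed $x\in H$, getting $\Pr[\bZ_x>0]\ge\tfrac12\min\{1,m\mcD(x)\}$, and then sums over $x\in H$ using the TV-distance constraint.
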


\paragraph{Proof of~\Cref{thm:main intro} from \Cref{thm:classical-lb-hitting}.} We first must move from distributions over $\Sigma^n$ to distributions over $\zo^N$. This is straightforward: For $r = \ceil{\log |\Sigma|} = \tilde{O}(n)$, there are efficient mappings $f:\Sigma \to \zo^r$ and $g:\zo^r \to \Sigma$ satisfying that $g(f(\sigma)) = \sigma$ for all $\sigma \in \Sigma$. Then, for $N \coloneqq n \cdot r$ and oracle $\bmcO_{\mathrm{boolean}}:\zo^r \to \zo$, we define the distribution $\mcD^{(N)}_{\bmcO_{\mathrm{boolean}}}$ to be the distribution formed by 
\begin{enumerate}
    \item  Defining the oracle $\bmcO:\Sigma \to \zo$ as $\bmcO \coloneqq \bmcO_{\mathrm{boolean}} \circ f$.
    \item Sampling a code word $\bc \sim \Unif(\mcC_{\bmcO})$.
    \item Outputting the $N$ bit string $(f(\bc_1),\ldots, f(\bc_n))$.
\end{enumerate}
Our lower bound from \Cref{thm:classical-lb-hitting} extends to a lower bound for $\mcD^{(N)}_{\bmcO_{\mathrm{boolean}}}$ because if a circuit could efficiently sample from $\mcD^{(N)}_{\bmcO_{\mathrm{boolean}}}$, it could apply $g$ to efficiently sample from $\Unif(\mcC_{\bmcO})$, which \Cref{thm:classical-lb-hitting} rules out. For the same reason, \cite{YZ22} (\Cref{thm:YZ quantum easiness}) construct an efficient quantum algorithm which samples a distribution that is $2^{-\Omega(n)}$-close to $\Unif(\mcC_{\bmcO})$ and this can easily be transformed to an efficient quantum sampler for  a distribution $2^{-\Omega(n)} = 2^{-\tilde{\Omega}(\sqrt{N})}$-close to $\mcD^{(N)}_{\bmcO_{\mathrm{boolean}}}$ by applying $f$.

Summarizing, we have constructed a distribution over $N$ bits that has an $2^{-\tilde{\Omega}(\sqrt{N})}$-approximate efficient quantum sampler, but no $(1 - 2^{-\tilde{\Omega}(\sqrt{N})})$-approximate efficient classical sampler. Taking $\wh{\mcD}^{(N)}_{\bmcO}$ to be the distribution this quantum algorithm outputs, we directly have a quantum sampler for exactly sampling $\wh{\mcD}^{(N)}_{\bmcO}$. Furthermore, by the triangle inequality for TV distance, no efficient classical sampler can $(1 - 2\cdot 2^{-\tilde{\Omega}(\sqrt{N})})$-approximately sample $\wh{\mcD}^{(N)}_{\bmcO}$. This completes the proof of~\Cref{thm:main intro}, with the hard distribution being $\wh{\mcD}^{(N)}_{\bmcO}$.

\subsection{A helper claim: From oracle circuits to collections of forests}

We first prove a helper claim (\Cref{claim:national park}). This claim shows how we can extract from an oracle circuit $G$, a small collection of forests with the following property: If $G^{\mcO}$ samples a distribution that is even non-negligibly close to $\mathrm{Unif}(\mcC_{\mcO})$, then one of the forests in this collection will $k$-succeed. In the next subsection we use~\Cref{claim:national park} and~\Cref{lem:main-lemma} to prove~\Cref{thm:main intro}. 

We begin with a probabilistic fact that will be useful for our proof of~\Cref{claim:national park}:

\begin{claim}[Pairwise independent draws contain many unique elements]
    \label{claim:pairwise-hits-often}
    Let $m,\eps>0$ and $H\subseteq\zo^n$ such that $|H|\geq m$. Let $\mcD$ be any distribution that is $(1-\eps)$-close in TV distance to $\Unif(H)$. Then, if $\bx^{(1)}, \ldots, \bx^{(m)}$ are pairwise independent samples each marginally from $\mcD$, with nonzero probability, they will contain at least $k= \frac{m\eps}{2}$ unique elements of $H$.
\end{claim}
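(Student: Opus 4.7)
The plan is to establish \Cref{claim:pairwise-hits-often} via a first-moment argument on the number of distinct hits in $H$, with pairwise independence entering through an element-wise Paley--Zygmund bound. Let $\bN$ denote the number of distinct elements of $H$ appearing among $\bx^{(1)},\ldots,\bx^{(m)}$. Because $\bN$ is integer-valued, it suffices to show $\Ex[\bN] \geq k$: some outcome must then realize $\bN \geq k$, giving the ``nonzero probability'' conclusion. Writing $\bN = \sum_{h \in H} \bZ_h$ with $\bZ_h = \Ind[\exists i \in [m] : \bx^{(i)} = h]$, linearity reduces the goal to lower-bounding $\sum_{h \in H}\Pr[\bZ_h = 1]$.

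For each $h \in H$, set $p_h \coloneqq \mcD(h)$ and $\bS_h \coloneqq \sum_{i=1}^m \Ind[\bx^{(i)} = h]$, so that $\bZ_h = \Ind[\bS_h > 0]$. Pairwise independence alone yields $\Ex[\bS_h] = m p_h$ and $\mathrm{Var}[\bS_h] = m p_h(1-p_h) \leq m p_h$. Applying Paley--Zygmund gives
\[
\Pr[\bZ_h = 1] \;=\; \Pr[\bS_h > 0] \;\geq\; \frac{\Ex[\bS_h]^2}{\Ex[\bS_h^2]} \;\geq\; \frac{mp_h}{1 + m p_h} \;\geq\; \tfrac{1}{2}\min(1, mp_h).
\]
Summing over $h \in H$ yields $\Ex[\bN] \geq \tfrac{m}{2}\sum_{h \in H}\min(1/m, p_h)$.

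The TV-distance hypothesis, through the standard identity $\dtv(\mcD, \Unif(H)) = 1 - \sum_x \min(\mcD(x), \Unif(H)(x))$, unpacks to $\sum_{h \in H}\min(1/|H|, p_h) \geq \eps$. The assumption $|H| \geq m$ now enters crucially: it gives $1/m \geq 1/|H|$, so $\min(1/m, p_h) \geq \min(1/|H|, p_h)$ pointwise. Chaining the inequalities,
\[
\Ex[\bN] \;\geq\; \tfrac{m}{2}\sum_{h \in H}\min(1/m, p_h) \;\geq\; \tfrac{m}{2}\sum_{h \in H}\min(1/|H|, p_h) \;\geq\; \tfrac{m}{2}\cdot \eps \;=\; k,
\]
which completes the argument.

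I do not expect a serious obstacle here. The only point that warrants care is that pairwise (rather than full) independence suffices for the Paley--Zygmund step; it does, because only the second moment of $\bS_h$ enters and pairwise independence already controls $\mathrm{Var}[\bS_h]$. The role of the hypothesis $|H| \geq m$ is precisely to reconcile the two natural clipping thresholds---the $1/|H|$ arising from the TV identity and the $1/m$ arising from the birthday-style second-moment bound; without it, a distribution $\mcD$ placing most of its mass on a handful of heavy atoms could in principle defeat the first-moment approach.
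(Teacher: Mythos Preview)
Your proof is correct and follows essentially the same approach as the paper: an element-wise Paley--Zygmund (second-moment) bound to get $\Pr[\bS_h>0]\geq \tfrac12\min(1,mp_h)$, summed over $h\in H$ and combined with the TV identity $d_{\mathrm{TV}}(\mcD,\Unif(H))=1-\sum_x\min(\mcD(x),\Unif(H)(x))$ together with $|H|\geq m$ to conclude $\Ex[\bN]\geq m\eps/2$. The only cosmetic difference is that the paper clips at $m/|H|$ after factoring out $m$, whereas you clip at $1/m$; the two are equivalent.
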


Our bound on $k$ is optimal in expectation up to the factor of $2$: 
If we set $\mcD$ to be a mixture distribution which is uniform over $|H|$ with probability $\eps$ and otherwise disjoint from $|H|$, then by taking $m$ samples from $\mcD$, the expected number of element of $H$ sampled is at most~$m \eps$.

\begin{proof} 
 We begin by calculating the probability that any specific $x$ appears in $\bx^{(1)}, \ldots, \bx^{(m)}$. 
 For each $i \in [m]$ let $\bz_i = \Ind[\bx_i = x]$ and $\bZ = \sum_{i \in [m]} \bz_i$ be the number of times $x$ appears in the sample. Then, the $\bz_i$ are pairwise independent and each have expectation $\mcD(x)$. Therefore, denoting $p \coloneqq \mcD(x)$,
    \begin{equation*}
        \Ex[\bZ] = pm \quad\quad\text{and}\quad\quad\Var[\bZ] = mp(1-p)\leq pm.
    \end{equation*}
    Applying the second moment method,
    \begin{align*}
         \Pr\bracket[\Big]{x \in \set[\Big]{\bx^{(1)}, \ldots, \bx^{(m)}}} &= \Pr[\bZ > 0] \\
         &\geq \frac{\Ex[\bZ]^2}{\Ex[\bZ^2]} \\
         &\geq \frac{p^2m^2}{p^2m^2 + pm} \\
         &= \frac{pm}{pm + 1}\\
         & \geq \frac12 \min\set*{1, pm}.
    \end{align*}

Let $\bU =\sum_{x\in H} \Ind\bracket[\Big]{x \in \set[\Big]{\bx^{(1)}, \ldots, \bx^{(m)}}} $ be the total number of unique elements in $H$. Then, 
\begin{align*}
    \Ex[\bU] &= \sum_{x\in H} \Pr\bracket[\Big]{x \in \set[\Big]{\bx^{(1)}, \ldots, \bx^{(m)}}}\\
    &\geq\frac12 \sum_{x\in H}  \min\set*{1,m\mcD(x)}\\
    &\geq \frac12 \sum_{x\in H}  \min\set*{\frac{m}{|H|},m\mcD(x)}. &\tag{$|H|\geq m$}
\end{align*}

Finally, we lower bound the probability mass that $\mcD(x)$ must place on $H$ using our constraint on TV distance. 
\begin{equation*}
    \sum_{x\in H} \left(\frac{1}{|H|} -  \min\set*{\frac{1}{|H|}, \mcD(x)} \right) =\sum_{x\in H | \mcD(x)\leq 1/|H|} \left(\frac{1}{|H|} -  \mcD(x)\right)\\
    \leq \dist_{TV}(\Unif(H),\mcD)\tag{definition of $\dist_{TV}$}\\
    \leq 1-\eps.
\end{equation*} 
Rearranging the above equation yields $\sum_{x\in H} \min\set*{\frac{1}{|H|}, \mcD(x)} \geq \eps$, or equivalently, 
\begin{equation*}
    \sum_{x\in H} \min\set*{\frac{m}{|H|}, m\mcD(x)} \geq m\eps. 
\end{equation*}

Combining this bound with our lower bound on $\Ex[\bU]$ gives
\begin{equation*}
     \Ex[\bU] \geq \frac{m\eps}{2}.
\end{equation*}
 Since this is the expected number of distinct elements, we certainly achieve at least this many with nonzero probability. \end{proof}

\begin{fact}[Randomness-efficient generation of $d$-wise independent random variables~\cite{Jof74}]
\label{fact:explicit prgs for rvs}
    For all $d$, $n$, and $m\le 2^n$, there exists a circuit $P:\zo^\lambda\to (\zo^n)^m$  s.t. $P(\Unif)$ generates a collection of $m$-many marginally uniform $d$-wise independent random variables over $\zo^n$ with $\lambda=O(dn)$. 
\end{fact}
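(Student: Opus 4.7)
The plan is to use the classical polynomial-evaluation construction. Identify $\zo^n$ with the finite field $\mathbb{F} \coloneqq \mathbb{F}_{2^n}$ (by fixing any irreducible polynomial of degree $n$ over $\mathbb{F}_2$), and fix $m$ distinct field elements $\alpha_1,\ldots,\alpha_m \in \mathbb{F}$, which is possible since $m \le 2^n = |\mathbb{F}|$. The seed of $P$ will be a tuple of coefficients $(a_0,\ldots,a_{d-1}) \in \mathbb{F}^d$, represented by $\lambda = dn$ uniformly random bits. Given such a seed, $P$ forms the polynomial $p(x) = \sum_{j=0}^{d-1} a_j x^j \in \mathbb{F}[x]$ and outputs $(p(\alpha_1),\ldots,p(\alpha_m)) \in \mathbb{F}^m \cong (\zo^n)^m$.

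To verify the $d$-wise independence with uniform marginals, fix any distinct indices $i_1,\ldots,i_d \in [m]$. The map sending $(a_0,\ldots,a_{d-1}) \mapsto (p(\alpha_{i_1}),\ldots,p(\alpha_{i_d}))$ is $\mathbb{F}$-linear and is represented by the $d \times d$ Vandermonde matrix in the $\alpha_{i_j}$'s. Since the $\alpha_{i_j}$ are distinct, the Vandermonde determinant $\prod_{j<k}(\alpha_{i_k}-\alpha_{i_j})$ is nonzero, so this linear map is a bijection on $\mathbb{F}^d$. Pushing the uniform distribution on $\mathbb{F}^d$ forward through a bijection yields the uniform distribution on $\mathbb{F}^d$, so $(p(\alpha_{i_1}),\ldots,p(\alpha_{i_d}))$ is uniformly distributed. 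This means these $d$ outputs are mutually independent and each marginally uniform—exactly the required property.

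For the circuit implementation: addition in $\mathbb{F}_{2^n}$ is coordinatewise XOR, and multiplication is polynomial multiplication over $\mathbb{F}_2$ followed by reduction modulo the fixed irreducible polynomial of degree $n$; both admit $\poly(n)$-size Boolean circuits. Each evaluation $p(\alpha_i)$ can then be computed by Horner's rule using $d-1$ field multiplications and $d-1$ field additions, giving a $\poly(n,d)$-size subcircuit per output; placing $m$ such subcircuits in parallel yields an overall circuit of size $\poly(n,d,m)$ while the seed length remains exactly $\lambda = dn = O(dn)$.

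The argument is entirely standard; the only place genuine content enters is the Vandermonde bijectivity step, which is what upgrades ``any function of $dn$ random bits'' into honest $d$-wise independence with uniform marginals. The remaining ingredients (existence of $\mathbb{F}_{2^n}$, explicit $\poly(n)$-size circuits for its arithmetic, and Horner evaluation) are all textbook and present no real obstacle; if one wants to be scrupulous, the bit-level circuit for multiplication in $\mathbb{F}_{2^n}$ can simply be cited from standard references rather than constructed from scratch.
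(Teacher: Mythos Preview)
The paper does not supply its own proof of this statement: it is stated as a \emph{Fact} with a citation to~\cite{Jof74} and invoked as a black box. Your proposal correctly fills in the standard polynomial-evaluation construction over $\mathbb{F}_{2^n}$ (which is precisely the construction underlying the cited reference), and the Vandermonde argument for $d$-wise independence is valid; so your write-up is correct and simply provides the details the paper elected to cite rather than reproduce.
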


We are now ready to prove the helper claim:

\begin{claim}[Each oracle circuit $\to$ Collection of forests]
\label{claim:national park}
    Let $m\leq 2^n$. For every size-$s$ oracle circuit $G$, there exists a collection of $2^{O(s)}$ forests $\mcF^{(1)}, \ldots, \mcF^{(2^{O(s)})}$, each containing $m$ depth-$s$ trees, with the following property: For any oracle $\mcO$, if $|\mcC_{\mcO}|\geq m$ and $G^{\mcO}$ samples a distribution $(1-\eps)$-close in TV distance to $\Unif(\mcC_{\mcO})$, then there is some $i \in [2^{(O(s)}]$ for which $\mcF^{(i)}$ $k$-succeeds on $\mcO$, where $k\geq \frac{m\eps}{2}$.
\end{claim}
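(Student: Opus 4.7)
The plan is to enumerate forests by the seeds of a pairwise-independent generator applied to the internal randomness of $G$, and then invoke \Cref{claim:pairwise-hits-often} to conclude that at least one seed gives a forest that $k$-succeeds on $\mcO$.

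First I will let $r \le s$ denote the number of random bits consumed by $G$, and instantiate \Cref{fact:explicit prgs for rvs} with $d = 2$ to obtain a generator $P : \zo^\lambda \to (\zo^r)^m$ with $\lambda = O(r) = O(s)$ whose $m$ output blocks are pairwise independent and each marginally uniform on $\zo^r$. For every seed $\sigma \in \zo^\lambda$ and every index $i \in [m]$, I will hard-wire the random-bit inputs of $G$ to the $i$-th block of $P(\sigma)$; this produces a deterministic size-$s$ oracle circuit, which by \Cref{fact:connect-oracle-tree} corresponds to a depth-$s$ tree $T^{(\sigma,i)}$. Define $\mcF^{(\sigma)} \coloneqq \{T^{(\sigma,1)},\ldots, T^{(\sigma,m)}\}$. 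This yields a collection of $2^{\lambda} = 2^{O(s)}$ forests, each consisting of exactly $m$ depth-$s$ trees, as required.

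Next, I will fix any $\mcO$ with $|\mcC_{\mcO}| \ge m$ and $\dtv(G^{\mcO}, \Unif(\mcC_{\mcO})) \le 1 - \eps$. Sampling $\bsigma \sim \Unif(\zo^\lambda)$, the outputs $T^{(\bsigma,1)}(\mcO),\ldots, T^{(\bsigma,m)}(\mcO)$ are pairwise independent (as deterministic functions of the pairwise-independent blocks of $P(\bsigma)$) and each is marginally distributed as $G^{\mcO}$. Applying \Cref{claim:pairwise-hits-often} with $H \coloneqq \mcC_{\mcO}$ and $\mcD \coloneqq G^{\mcO}$, these $m$ outputs contain at least $m\eps/2$ distinct elements of $\mcC_{\mcO}$ with \emph{nonzero probability} over $\bsigma$. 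Hence there must exist at least one seed $\sigma^\star \in \zo^\lambda$ for which $\mcF^{(\sigma^\star)}$ $k$-succeeds on $\mcO$ with $k \ge m\eps/2$.

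I do not anticipate any genuine obstacle: the technical heart of the argument---the second-moment computation that handles pairwise rather than full independence---is already contained in \Cref{claim:pairwise-hits-often}, and the generator is off the shelf. The only delicate point is parameter accounting: I must use pairwise, rather than full, independence in order to keep the seed length $\lambda = O(s)$ and hence the total number of forests at $2^{O(s)}$, which is the budget that will be needed for the union bound in the proof of \Cref{thm:classical-lb-hitting}. Using fully independent randomness would inflate the seed length to $\Omega(sm)$ and produce $2^{\Omega(sm)}$ forests, which is far too many for the eventual union bound against \Cref{lem:forest fails}.
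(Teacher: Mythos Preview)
Your proposal is correct and follows essentially the same approach as the paper: enumerate forests by the seeds of a pairwise-independent generator over $G$'s internal randomness, convert each hard-wired copy of $G$ into a depth-$s$ tree via \Cref{fact:connect-oracle-tree}, and invoke \Cref{claim:pairwise-hits-often} to find a good seed. Your emphasis on why pairwise (rather than full) independence is needed to keep the forest count at $2^{O(s)}$ matches the paper's reasoning exactly.
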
 

\begin{proof}
Let $G: \zo^{\lambda} \to \zo^n$ be a size-$s$ oracle circuit, generating a distribution $\mcD$.  We observe that if $\ba^{(1)}, \ldots, \ba^{(m)}$ are pairwise independent and each uniform on $\zo^\lambda$, then \[ \bx^{(1)} = G(\ba^{(1)}), \ldots, \bx^{(m)} = G(\ba^{(m)})\] are pairwise independent and each have $\mcD$ as their marginal distribution. We use \Cref{fact:explicit prgs for rvs} to generate such $\ba^{(1)}, \ldots, \ba^{(m)}$. Specifically, there is a pairwise independent generator $P:\zo^{O(\lambda)} \to (\zo^\lambda)^m$ so that for $\br \sim \zo^{O(\lambda)}$ drawn uniformly, $P(\br)$ produces $\ba^{(1)}, \ldots, \ba^{(m)}$ that are pairwise independent and each uniform on $\zo^\lambda$. Let $P(\br)^{(i)}$ denote the corresponding $\ba^{(i)}$. In this notation, the random variables  \[ \bx^{(1)} =G(P(\br)^{(1)}),\dots, \bx^{(m)} =G(P(\br)^{(m)}) \] are pairwise independent with $\mcD$ as its marginal. 

We now describe how to construct the collection of forests from $G$. There are $2^{O(\lambda)}\leq 2^{O(s)}$ many seeds $r$, and each gives rise to a different forest. For a specific seed $r$, the associated forest is the trees $T^{(1)},\dots, T^{(m)}$ where the tree $T^{(i)}$ outputs $G(P(r)^{(i)})$. By \Cref{fact:connect-oracle-tree}, each such tree $T^{(i)}$ exists and has depth $s$.


If $|\mcC_{\mcO}|\geq m$ and the distribution $\mcD$ that $G^{\mcO}$ samples is  $(1-\eps)$-close to $\Unif(\mcC_{\mcO})$, then the random variables $\bx^{(1)} =G(P(\br)^{(1)}),\dots, \bx^{(m)} =G(P(\br)^{(m)}$ meet the conditions of \Cref{claim:pairwise-hits-often}. Therefore, with nonzero probability, it will contain $k\geq \frac{m\eps}{2}$ unique elements of $\mcC_{\mcO}$. In other words, there must be a single choice of the seed $r \in \zo^{O(\lambda)}$ for which the corresponding forest $T^{(1)},\dots, T^{(m)}$ $k$-succeeds.  
\end{proof}

\subsection{Proof of~\Cref{thm:classical-lb-hitting} from~\Cref{lem:forest fails} and~\Cref{claim:national park}}

 First, we will prove that the set $\mcC_{\bmcO}$ has exponential size with all but negligible probability. Doing so will allow us to meet the conditions of~\Cref{claim:national park}. 

\begin{claim}[$\mcC_{\bmcO}$ is large w.h.p.]
    \label{claim:set-is-big}
    With probability $1-2^{-\Omega(n)}$ over the random oracle $\bmcO$, we have
    $|\mcC_{\bmcO}|\geq 2^{\Omega(n)}$.   
\end{claim}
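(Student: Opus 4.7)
The plan is to apply the second moment method to the random variable $X \coloneqq |\mcC_{\bmcO}|$. First I would compute $\E[X]$. The key structural observation is that the alphabets $\Sigma_1,\ldots,\Sigma_n$ are disjoint, so for any single codeword $c=(c_1,\ldots,c_n)\in\mcC$, the symbols $c_1,\ldots,c_n$ are pairwise distinct, which makes $\bmcO(c_1),\ldots,\bmcO(c_n)$ fully independent uniform bits. Thus $\Prx_{\bmcO}[c\in\mcC_{\bmcO}]=2^{-n}$ for every $c\in\mcC$, and by linearity together with $|\mcC|\geq 2^{an}$ from \Cref{fact:code-exists} (for some constant $a>1$),
\[ \E[X] \;=\; |\mcC|\cdot 2^{-n} \;\geq\; 2^{(a-1)n} \;=\; 2^{\Omega(n)}. \]

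Next I would compute $\E[X^2]$ by a pairwise analysis. For two codewords $c\neq c'$ that agree on exactly $n-d$ positions (so $d=\hw(c-c')$ is their Hamming distance), the disjoint-alphabet structure forces the multiset $\{c_1,\ldots,c_n,c'_1,\ldots,c'_n\}$ to contain exactly $n+d$ distinct symbols, whose oracle values are independent. Hence $\Pr[c,c'\in\mcC_{\bmcO}]=2^{-(n+d)}$. Using linearity of $\mcC$ to replace the distance distribution by the weight distribution, and writing $A_d \coloneqq |\{c\in\mcC:\hw(c)=d\}|$,
\[ \E[X^2] \;=\; |\mcC|\cdot 2^{-n}\Bigl(1+\sum_{d=1}^n A_d\cdot 2^{-d}\Bigr). \]

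The quantitative input that controls the sum is the weight bound from \Cref{fact:code-exists}\ref{fact:code-exists--item:hw}: $A_d/|\mcC|\leq (n/|\Sigma|)^{n-d}$ for $d\in[1,n-1]$. Since $|\Sigma|=2^{\Theta(n\log n)}$, substituting $l=n-d$ yields
\[ \sum_{d=1}^{n-1} A_d\cdot 2^{-d} \;\leq\; |\mcC|\cdot 2^{-n}\sum_{l\geq 1}\Bigl(\tfrac{2n}{|\Sigma|}\Bigr)^l \;=\; \E[X]\cdot 2^{-\Omega(n\log n)}, \]
while the trivial bound $A_n\leq |\mcC|$ contributes at most $\E[X]$ to the $d=n$ term. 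Combining gives $\Var[X]\leq \E[X]\cdot(1+o(1))$, so $\Var[X]/\E[X]^2 = O(1/\E[X]) = 2^{-\Omega(n)}$. Chebyshev's inequality then yields $\Prx_{\bmcO}[X\leq \E[X]/2]\leq 4\Var[X]/\E[X]^2\leq 2^{-\Omega(n)}$, so with probability $1-2^{-\Omega(n)}$ we have $|\mcC_{\bmcO}|\geq \E[X]/2 = 2^{\Omega(n)}$.

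The only step needing any care is the distinct-symbol count for the pair $(c,c')$: the disjoint-alphabet feature is what ensures that correlations are controlled purely by $\hw(c-c')$, with no cross-position collisions to worry about. Once that reduction is in hand, the second-moment computation is routine and the weight-distribution bound of \Cref{fact:code-exists} does the essential work. I do not anticipate a real obstacle; the main point to double-check is that the $d=n$ term (for which \Cref{fact:code-exists}\ref{fact:code-exists--item:hw} gives no bound) is already absorbed into an $O(\E[X])$ contribution to $\Var[X]$ via the trivial bound $A_n\leq |\mcC|$.
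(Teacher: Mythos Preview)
Your proposal is correct and follows essentially the same route as the paper's proof: both apply the second moment method to $|\mcC_{\bmcO}|$, use the disjoint-alphabet structure to get $\Pr[c,c'\in\mcC_{\bmcO}]=2^{-(n+\hw(c-c'))}$, invoke linearity of $\mcC$ together with the weight bound of \Cref{fact:code-exists}\ref{fact:code-exists--item:hw} to control the cross terms, and finish with Chebyshev. The one loose spot is the intermediate claim $\Var[X]\le \E[X](1+o(1))$, which need not hold literally since \Cref{fact:code-exists} gives only a lower bound on $|\mcC|$; however your computations actually yield $\Var[X]/\E[X]^2 \le 1/\E[X] + 2^{-\Omega(n\log n)} = 2^{-\Omega(n)}$, which is exactly the bound the paper derives and all that is needed.
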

\begin{proof}

Let $\bz_i \coloneqq \Ind[\bmcO(c_i)=0^n]$ be the indicator that the $i$th codeword hashes to $0^n$, and let $d \coloneqq |\mcC|$. Then, $\bZ \coloneqq \sum_{i\in [d]} \bz_i$ equals $|\mcC_{\bmcO}|$ and 
\[ 
    \Ex_{\bmcO}[\bZ] =  \sum_{i\in [d]} \Ex_{\bmcO}\left[\bz_i\right] = d\cdot 2^{-n}
\]
To upper bound the second moment, first note that 
\[ \Ex_{\bmcO}\left[\bz_{\bi} \bz_{\bj}\right] = 2^{-(n + \hw(c_{\bi}-c_{\bj}))}\]
because in order for both $c_{\bi}$ and $c_{\bj}$ to hash to all 0, the symbols for which they differ must all hash to 0 (there are  $2 \cdot \hw(c_{\bi}-c_{\bj})$ such symbols), and the symbols on which they agree must also hash to 0 (there are $n - \hw(c_{\bi}-c_{\bj})$ such symbols). Therefore,
\begin{align*}
\Ex_{\bmcO}[\bZ^2] &=  \sum_{i\in [d]} \sum_{j\in [d]} \Ex_{\bmcO}\left[\bz_i \bz_j\right]\\
&= d^2 \Ex_{\bi, \bj \sim [d]} \left[ 2^{-(n + \hw(c_{\bi}-c_{\bj}))}\right]\\
 &=  d^2 \sum_{l\in[n]} 2^{-(n +n-l)} \Prx_{\bi,\bj\sim[d]}[\hw(c_{\bi}-c_{\bj})=n-l]\\
    &= d \cdot 2^{-n} +  d^2 \cdot 2^{-2n} \sum_{l\in [n-1]} 2^l \Prx_{\bi,\bj\sim[d]}[\hw(c_{\bi}-c_{\bj})=n-l]
    \end{align*}
    
    The last equation follows from considering $l=n$ separately. This value of $l$ corresponds to  the case when $\bi=\bj$, which happens with probability $\frac1d$. Applying \Cref{fact:code-exists--item:hw} of \Cref{fact:code-exists} to the difference $c_{\bi}-c_{\bj}$, we have 
\[ \Prx_{\bi,\bj \sim [d]} [ \hw(c_{\bi}-c_{\bj})= n-l]\leq \left(\frac{n}{|\Sigma|}\right)^l\] 
for all $l\in[n-1]$. Plugging this into the above equation yields

    \begin{align*}
    \Ex_{\bmcO}[\bZ^2]&\leq d \cdot 2^{-n} + d^2 \cdot 2^{-2n} \sum_{l\in [n-1]} 2^l \left(\frac{n}{|\Sigma|}\right)^l &\tag{\Cref{fact:code-exists}}\\
    &\leq d \cdot 2^{-n} + d^2 \cdot 2^{-2n} \left(\frac{1}{1-\frac{2n}{|\Sigma|}}\right) &\tag{Geometric series}\\
    &=  d \cdot 2^{-n} + d^2 \cdot 2^{-2n} \left(1+\frac{2n}{|\Sigma| -2n}\right).
\end{align*}
  Finally, we can calculate the variance:
\begin{align*}
\Var(\bZ)&= \Ex_{\bmcO}[\bZ^2]-\Ex_{\bmcO}[\bZ]^2\\
&\leq d \cdot 2^{-n} + d^2 \cdot 2^{-2n} \left(1+\frac{2n}{|\Sigma| -2n}\right) - d^2 \cdot 2^{-2n}\\
&= d \cdot 2^{-n} + d^2 \cdot 2^{-2n} \left(\frac{2n}{|\Sigma| -2n}\right).
\end{align*}
By Chebyshev's inequality, we have 
\begin{align*}
    \Pr \left[|\bZ -\Ex[\bZ]| \geq \frac{d\cdot 2^{-n}}{2} \right] 
    &\leq \frac{ 4 \Var[\bZ]}{d^2\cdot 2^{-2n}}\\
    &\leq 4\left( \frac{1}{d \cdot 2^{-n}} + \frac{2n}{|\Sigma| -2n}\right).
\end{align*}
By \Cref{fact:code-exists}, $d\geq 2^{a n}$ for some constant $a>1$, and $|\Sigma|= 2^{\Theta(n\log n)}$. Therefore, the right hand side of the above equation is exponentially small in $n$. Recalling  that $\Ex[\bZ]\geq d \cdot 2^{-n}$, we can conclude that $\bZ = |\mcC_{\bmcO}|\geq 2^{\Omega(n)}$ with all but negligible probability. 
\end{proof}

\newcommand{\bad}{\text{Bad}(G,\bmcO)}

\paragraph{Completing the proof of~\Cref{thm:classical-lb-hitting}.}

By \Cref{claim:set-is-big}, we are free to assume that $|\mcC_{\bmcO}|\geq 2^{bn}$ for some constant $b>0$.
Let $s=\sqrt{\ell}$; let $m=\min\set*{2^{bn},  2^{\zeta n/2}}$; and let $\eps = m^{-1/2}$ where $\ell$ and $\zeta$ are the list-recoverable parameters as in the statement of \Cref{fact:code-exists}. Let $\bad$ be the event that the oracle circuit $G$ samples  a distribution that is $(1-\eps)$-close to $\Unif(\mcC_{\bmcO})$. We will show that the probability of $\bad$ for any single size-$s$ oracle circuit  $G$ is much less than $s^{-\Omega(s)}$, so we can therefore afford to union bound over all size-$s$ circuits. 

Thus, we begin by fixing a single oracle circuit $G$. By~\Cref{claim:national park}, $G$ has an associated collection of $2^{O(s)}$ forests, and if $G$ samples a distribution close in TV distance to $\Unif(\mcC_{\bmcO})$, then one of these forests will $k$-succeed, where $k\geq \frac{m\eps}{2}$. Therefore, to bound the probability of $\bad$, it suffices to bound the probability that any of $G$'s associated forests $k$-succeeds. We first bound the probability of a single forest and then union bound over all $2^{O(s)}$ forests. 

By our choices of $m$ and $\eps$, we can conclude that $k\geq \frac{m\eps}{2}\geq 2^{\Omega(n)}$. By \Cref{fact:code-exists}, $L\leq2^{O(n^{c'})}\leq k$ (the inequality follows since $c'<1$), so for any forest $\mcF$ we have $ \Prx_{\bmcO}[\mathcal{F}\text{ }k\text{-succeeds on }\bmcO ]\leq  \Prx_{\bmcO}[\mathcal{F}\text{ }(L+1)\text{-succeeds on }\bmcO ]$. We can therefore apply \Cref{lem:forest fails} to calculate the probability that any fixed forest $\mcF$ $k$-succeeds: 
\begin{align*}
 \Prx_{\bmcO}[\mathcal{F}\text{ }k\text{-succeeds on }\bmcO ]&\le (mL\cdot2^{-(n-t)})^{\floor{\ell/q}}\\
 &\le (m\cdot 2^{O(n^{c'})} \cdot2^{-\zeta n})^{\floor{\ell/q}} &\tag{Setting $L$ and $t$ as in \Cref{fact:code-exists}}\\
 &\le (2^{O(n^{c'})} \cdot2^{-\zeta n/2})^{\floor{\ell/q}} &\tag{By choice of $m$}\\
 &\le (2^{-\Omega(n)})^{\floor{\ell/q}}&\tag{Since $c'<1$}\\
 &\le 2^{-\Omega(n\sqrt{\ell})},
\end{align*}
where the last inequality follows since the tree depth $q=s=\sqrt{\ell}$ as in the statement of \Cref{claim:national park}. Union bounding over all $2^{O(s)}=2^{O(\sqrt{\ell})}$ forests, we conclude that the probability that there exists a forest that $k$-succeeds is at most $2^{-\Omega(n\sqrt{\ell})} 2^{O(\sqrt{\ell})} = 2^{-\Omega(n\sqrt{\ell})}.$ Therefore, for a fixed circuit $G$, the probability of $\bad$ is at most $2^{-\Omega(n\sqrt{\ell})}.$

We now union bound over all size-$s$ circuits, of which there are $2^{O(s\log s)}$. Plugging in $s=\sqrt{\ell}$, we conclude 
\begin{align*}
    \Prx_{\bmcO}[\exists \text{ a size-$s$ $G$ s.t. }\bad]&\leq 2^{-\Omega(n\sqrt{\ell})}\cdot 2^{O(\sqrt \ell \log \ell)}.
\end{align*}
By \Cref{fact:code-exists}, $\ell=2^{\Theta(n^c)}$ where $c<1$ and so the above probability is at most $2^{-\Omega(n\sqrt{\ell})}$. The failure probability of~\Cref{thm:classical-lb-hitting} is therefore at most $2^{-\Omega(n)}$, dominated by that of~\Cref{claim:set-is-big}.

\section*{Acknowledgments}

We thank Scott Aaronson, Adam Bouland, Jordan Docter, and John Wright for helpful discussions.

Guy, Caleb, Carmen, and Li-Yang are supported by NSF awards 1942123, 2211237, 2224246, a Sloan Research Fellowship, and a Google Research Scholar Award. Guy is also supported by a Jane Street Graduate Research Fellowship and Carmen by an NSF GRFP. Jane is supported by NSF awards 2006664 and 310818 and an NSF GRFP.

\bibliographystyle{alpha}
\bibliography{ref}

\end{document}